\documentclass[11pt]{article}

\usepackage[margin=1in]{geometry}
\usepackage[T1]{fontenc}
\usepackage[utf8]{inputenc}
\usepackage{lmodern}
\usepackage{microtype}
\usepackage{amsmath,amssymb,amsthm,mathtools}
\usepackage{enumitem}
\usepackage{xcolor}
\usepackage{hyperref}
\usepackage{aliascnt}

\hypersetup{
  colorlinks=true,
  linkcolor=blue!50!black,
  citecolor=blue!50!black,
  urlcolor=blue!50!black
}

\newtheorem{theorem}{Theorem}[section]

\newaliascnt{proposition}{theorem}
\newtheorem{proposition}[proposition]{Proposition}
\aliascntresetthe{proposition}

\newaliascnt{lemma}{theorem}

\aliascntresetthe{lemma}

\newaliascnt{corollary}{theorem}
\newtheorem{corollary}[corollary]{Corollary}
\aliascntresetthe{corollary}

\newaliascnt{definition}{theorem}
\newtheorem{definition}[definition]{Definition}
\aliascntresetthe{definition}

\newaliascnt{example}{theorem}
\newtheorem{example}[example]{Example}
\aliascntresetthe{example}

\newaliascnt{remark}{theorem}
\newtheorem{remark}[remark]{Remark}
\aliascntresetthe{remark}

\newaliascnt{question}{theorem}

\aliascntresetthe{question}

\usepackage[nameinlink,capitalise]{cleveref}

\crefname{theorem}{Theorem}{Theorems}
\Crefname{theorem}{Theorem}{Theorems}
\crefname{proposition}{Proposition}{Propositions}
\Crefname{proposition}{Proposition}{Propositions}
\crefname{lemma}{Lemma}{Lemmas}
\Crefname{lemma}{Lemma}{Lemmas}
\crefname{corollary}{Corollary}{Corollaries}
\Crefname{corollary}{Corollary}{Corollaries}
\crefname{definition}{Definition}{Definitions}
\Crefname{definition}{Definition}{Definitions}
\crefname{example}{Example}{Examples}
\Crefname{example}{Example}{Examples}
\crefname{remark}{Remark}{Remarks}
\Crefname{remark}{Remark}{Remarks}
\crefname{question}{Question}{Questions}
\Crefname{question}{Question}{Questions}

\newcommand{\N}{\mathbb N}

\newcommand{\Filt}{\mathcal F}

\newcommand{\Ext}{\operatorname{Ext}}
\newcommand{\cl}{\operatorname{cl}}
\newcommand{\M}{\mathsf M}
\newcommand{\Runs}{\operatorname{Runs}}
\newcommand{\K}{\mathcal K}
\newcommand{\pow}{\mathcal P}
\newcommand{\must}{\mathsf{must}}
\newcommand{\may}{\mathsf{may}}
\newcommand{\CCS}{\mathsf{CCS}}

\newcommand{\Res}{\mathsf{Res}}
\newcommand{\RunMean}{\mathsf{RunMean}}
\newcommand{\Act}{\mathsf{Act}}

\newcommand{\Inf}{\operatorname{Inf}}
\newcommand{\Leaves}{\operatorname{Leaves}}

\title{A Stone--\v{C}ech Collecting Semantics for Residual Process Behaviour}

\author{%
  Mike Stannett\\
  School of Computer Science\\
  University of Sheffield
}

\date{}

\begin{document}
\maketitle

\begin{abstract}
This paper develops a compact collecting semantics for the residual behaviour left by nonterminating computation.  An infinite execution is observed as a filtered run
\[
  a:T\to X,
\]
where \(T\) carries a filter \(\Filt\) of large sets of times and \(X\) is a Tychonoff space of observations.  Its meaning is the compact set
\[
  \M_X^{\Filt}(a)
  =
  \beta a[\Ext(\Filt)]
  =
  \bigcap_{F\in\Filt}\cl_{\beta X}a[F]
  \subseteq \beta X.
\]
For sequential time this is the tail-cluster set of the stream in the Stone--\v{C}ech compactification of the observation space.  It gives a common semantics to ordinary recurrence, mixed recurrent behaviour, and escape through noncompact parts of the observation space.

The basic theory establishes tail invariance, functoriality under continuous observations, and a temporal reading for clopen observations: containment in the corresponding clopen region of \(\beta X\) is eventual truth, while nonempty intersection is recurrence.  Progress and fairness assumptions are represented by strengthening the time filter.  Relational meanings are obtained by compactifying products, so correlations between observations made along the same asymptotic view of time are retained.

The main application is to residual behaviour in CCS.  Infinite executions are read as streams of residual processes modulo structural congruence.  The resulting semantics distinguishes stable divergence, finite recurrent divergence, mixed recurrence with escape, and escape through unbounded residual growth.  It validates residual-tail laws for prefixing, guarded unfolding, finite choice, and finite prefix-choice forms, while also identifying the boundary of those laws under parallel composition and synchronisation.  Finite observational quotients provide the computational interface to the compact semantics: abstract meanings become recurrent states and strongly connected component calculations, and resource observations detect unbounded escape without requiring individual points of the Stone--\v{C}ech remainder to be inspected.
\end{abstract}

\section{Introduction}

An infinite computation is not exhausted by the sequence of actions it has performed.  After each transition there is also a residual process, namely the process still to be executed.  The long-term shape of the resulting residual sequence can be quite simple, but it need not be.  It may settle to one process, move forever around a finite cycle, return infinitely often to some observation while also doing other things, or leave every finite part of the residual space.  In the last case no residual process need recur, even though a plainly visible quantity, such as the number of active parallel components, grows without bound.

The problem is to give these different residual tails a common semantic form.  In finite quotients the answer should reduce to ordinary recurrence.  For noncompact spaces of residual processes it should still produce a compact value.  And, since such a compact object is too large to be used by inspection, it should behave well under the observations through which it is read.  Stone--\v{C}ech compactification provides exactly this kind of completion: it makes a Tychonoff observation space compact, and its universal property keeps continuous observations compatible with the construction.

A run is written
\[
  a:T\to X,
\]
where \(T\) is a set of time indices and \(X\) is a Tychonoff observation space.  A filter \(\Filt\) on \(T\) specifies which sets of times count as large.  The meaning of the run, relative to \(\Filt\), is
\[
  \M_X^{\Filt}(a)
  =
  \beta a[\Ext(\Filt)]
  \subseteq \beta X,
\]
where \(\Ext(\Filt)\) is the compact set of ultrafilters on \(T\) extending \(\Filt\).  Equivalently,
\[
  \M_X^{\Filt}(a)
  =
  \bigcap_{F\in\Filt}\cl_{\beta X}a[F].
\]
Thus the meaning is the compact set of asymptotic values of the run, computed in the Stone--\v{C}ech compactification of the observation space.

For ordinary streams \(a:\N\to X\), with the cofinite filter on \(\N\), this becomes
\[
  \M_X(a)=\beta a[\N^*]
  =
  \bigcap_{N<\omega}\cl_{\beta X}\{a_n:n\geq N\}.
\]
In a finite discrete space this is just the set of values visited infinitely often.  For a convergent stream it is the ordinary limit.  For the stream \(0,1,2,\ldots\) in the discrete space \(\N\), it is \(\N^*=\beta\N\setminus\N\), recording escape from every finite subset of \(\N\).  Mixed cases are also possible: a stream may return to one value infinitely often while also visiting an unbounded sequence of new values.

The same alternatives occur for residual processes.  In the process examples, residual processes are considered modulo structural congruence.  Thus \([P]\) denotes the structural-congruence class of the process term \(P\); the congruence is the usual CCS one, including the associativity, commutativity and unit laws for parallel composition.  Guarded defining equations are not being added to structural congruence; they are used through the ordinary transition rule for constants.  The full convention is set out in \Cref{sec:ccs}.

A process such as
\[
  A \stackrel{\mathrm{def}}{=} a.A
\]
has a stable infinite execution, whose residual meaning is the singleton \(\{[A]\}\).  A two-state loop
\[
  P\stackrel{\mathrm{def}}{=}a.Q,
  \qquad
  Q\stackrel{\mathrm{def}}{=}b.P
\]
has finite recurrent residual meaning \(\{[P],[Q]\}\).  By contrast,
\[
  G \stackrel{\mathrm{def}}{=} a.(G\mid G)
\]
has an execution whose residuals are
\[
  G,
  \quad
  G\mid G,
  \quad
  G\mid G\mid G,
  \quad\ldots
\]
up to the structural laws for parallel composition.  No residual class need recur in this execution.  A resource observable counting the number of parallel \(G\)-components sends the corresponding compact meaning into \(\N^*\), so unbounded residual growth is visible through an ordinary integer-valued observation.  Nondeterministic examples may combine these behaviours, with different runs contributing stable, recurrent, or escaping meanings.

Semantic information is extracted from the compact meaning by observations.  If \(f:X\to Y\) is continuous, then the meaning of the observed run \(f\circ a\) is the image of \(\M_X^{\Filt}(a)\) under \(\beta f\).  For clopen observations the reading is elementary, but the ambient space matters.  If \(C\subseteq X\) is clopen, the relevant region is its closure \(\cl_{\beta X}C\) in the compactification, equivalently the clopen extension determined by the characteristic map of \(C\).  Inclusion of \(\M_X^{\Filt}(a)\) in this region says that the run is eventually in \(C\).  Nonempty intersection says that the run returns to \(C\) recurrently.  In a discrete observation space, these tests apply to arbitrary predicates.

The general construction is used in five related ways.
\begin{enumerate}[label=(\roman*),leftmargin=2.5em]
\item
\Cref{thm:tail-closure} identifies the meaning with the intersection of tail closures.  This gives compactness and invariance under changes on a small set of times.

\item
\Cref{thm:functoriality} proves that continuous observations commute with meanings.  \Cref{prop:clopen-temporal} and \Cref{thm:discrete-temporal} give the eventual and recurrent readings for clopen, and hence discrete, observations.

\item
Fairness and progress assumptions are handled by replacing the time filter with the filter generated by the original tail filter and the required progress sets.  This restricts the admissible ultrafilters to those respecting the chosen fairness assumptions.

\item
Relational properties are treated by compactifying products.  For simultaneous observations \(a_i:T\to X_i\), the joint meaning lies in \(\beta(X_1\times\cdots\times X_k)\), so correlations along a common asymptotic view of time are not lost.

\item
Finite observational quotients give computable images of the compact meaning.  In a finite abstract transition graph, abstract meanings are exactly the recurrent sets visited by infinite paths; under the edge-fair convention used here they are reachable terminal strongly connected components.
\end{enumerate}

The CCS part then fixes a process universe, a labelled transition relation and a topology on residual processes.  An execution
\[
  P_0\to P_1\to P_2\to\cdots
\]
is read as the stream \(n\mapsto [P_n]\).  This gives a run-sensitive semantics, recording the compact meaning of each infinite execution, and a flattened semantics, recording the union of all such asymptotic points.  The discrete topology distinguishes structural-congruence classes; observation topologies built from Boolean algebras of process predicates give coarser versions.

The CCS instance also proves a small algebra of tail laws.  Prefixing by a single action, visible or silent, does not affect asymptotic residual meaning:
\[
  \RunMean_{\CCS}([\alpha.P])=\RunMean_{\CCS}([P]).
\]
Finite choice is interpreted by union:
\[
  \RunMean_{\CCS}([P+Q])
  =
  \RunMean_{\CCS}([P])\cup\RunMean_{\CCS}([Q]).
\]
Consequently finite prefix-choice wrappers can be collapsed to the union of the meanings of their leaves.  The paper also records the boundary of this normalisation: it is not a congruence for arbitrary CCS contexts.  Parallel composition, restriction and synchronisation can make an otherwise disappearing prefix relevant before it is consumed.

\subsection*{Why Stone--\v{C}ech?}

The Stone--\v{C}ech compactification has three roles in the semantics.  First, it supplies a compact completion in which tails in noncompact observation spaces acquire meanings.  Ordinary limit sets already account for recurrent tail behaviour in compact spaces, and in finite quotients they reduce to the states seen infinitely often.  Residual processes and resource observations need not live in compact spaces, however.  A sequence may then have no recurrent state while still exhibiting a definite asymptotic pattern, such as escape through larger and larger residuals.  Passing from \(X\) to \(\beta X\) gives such a tail a compact value.

Second, \(\beta X\) is universal for continuous observations of a Tychonoff space.  If \(f:X\to Y\) is continuous, the compact meaning of the observed run \(f\circ a\) is obtained by applying \(\beta f\) to the compact meaning of \(a\) in \(\beta X\).  The semantic object in \(\beta X\) therefore determines all continuous observational images at once.  Finite Boolean quotients, modal observation maps and integer-valued resource maps are all read from the same compact tail object by functoriality.

Third, the same universality explains the computational use of the construction.  Individual non-principal ultrafilters need not be inspected.  The information used in applications comes from images of the compact meaning: clopen predicates give eventual and recurrent assertions, finite observational quotients give recurrent abstract states and strongly connected component calculations, and resource observations can witness unbounded escape by landing in a Stone--\v{C}ech remainder such as \(\N^*\).  The compactification is therefore a semantic carrier whose finite and resource-level projections provide the effective interface.

The main contributions are these.
\begin{enumerate}[label=(\roman*),leftmargin=2.5em]
\item
A compact collecting semantics is defined for residual tails.  A filtered run \(a:T\to X\) is assigned the compact set
\[
  \M_X^{\Filt}(a)=\beta a[\Ext(\Filt)]
  =\bigcap_{F\in\Filt}\cl_{\beta X}a[F]
  \subseteq\beta X,
\]
with ordinary sequential streams obtained from the cofinite filter on \(\N\).  This gives a uniform semantics for stable, recurrent, escaping and mixed tail behaviour.

\item
The construction is given an observational reading.  Continuous observations commute with meanings, clopen observations express eventual and recurrent truth, resource maps detect unbounded escape, and the same filtered formulation accommodates nondeterminism, fairness assumptions, stream-transformer descent and relational meanings.

\item
Finite observational quotients provide the computational interface.  A finite Boolean algebra of process observations induces a finite quotient in which the image of a compact meaning is exactly the set of recurrent abstract states.  For finite abstract transition graphs this becomes the usual strongly connected component calculation, with terminal components appearing under the edge-fair convention used here.

\item
The process-theoretic application is a residual-tail semantics for CCS.  Infinite executions are read as streams of residual processes modulo structural congruence.  The resulting semantics validates tail laws for prefixing, guarded unfolding, finite choice and finite prefix-choice forms, distinguishes stable divergence, finite recurrence and unbounded residual growth, and records the boundary of these laws under parallel composition and synchronisation.
\end{enumerate}

The paper is organised as follows.  \Cref{sec:filtered-semantics} defines the filtered Stone--\v{C}ech semantics and proves the tail-closure theorem.  \Cref{sec:observations} proves functoriality and the clopen temporal reading.  \Cref{sec:nondeterminism,sec:fairness,sec:compositionality,sec:relational} treat nondeterminism, fairness, abstract descent and relational meanings.  \Cref{sec:finite-abstractions} develops finite observational quotients and finite graph computation.  \Cref{sec:ccs} gives the CCS residual semantics, the prefix-choice laws, the parallel-composition boundary example, and the unbounded-observable escape criterion.  The last sections discuss mobile calculi, related work, limitations and further directions.

\paragraph{Topological conventions.}
All spaces to which the Stone--\v{C}ech compactification is applied are assumed to be Tychonoff, that is, completely regular and Hausdorff.  Compact spaces are compact Hausdorff spaces.  If \(X\) is Tychonoff, then \(\beta X\) denotes its Stone--\v{C}ech compactification, with \(X\) identified with its canonical dense image in \(\beta X\).  The notation
\[
  X^*=\beta X\setminus X
\]
will be used when no confusion can arise.  When it matters that \(X\) is open in \(\beta X\), this will be stated explicitly; in particular, this openness holds when \(X\) is locally compact.

If a topology on a set of processes is generated by a Boolean algebra of observations, the following convention is used.  Let \(\mathcal B\subseteq\pow(X)\) be a Boolean algebra of subsets of a set \(X\), and give \(X\) the topology having \(\mathcal B\) as a clopen base.  This topology is zero-dimensional, but it is Hausdorff exactly when \(\mathcal B\) separates points.  Thus, in applications, either \(\mathcal B\) is assumed to separate the points under consideration, or \(X\) is replaced by the observational quotient
\[
  X/{\equiv_{\mathcal B}},
\]
where
\[
  x\equiv_{\mathcal B}y
  \quad\Longleftrightarrow\quad
  \forall B\in\mathcal B\ (x\in B\Leftrightarrow y\in B).
\]
The Boolean algebra then separates the quotient points, and the induced topology on \(X/{\equiv_{\mathcal B}}\) is zero-dimensional Hausdorff, hence Tychonoff.

\section{The filtered Stone--\v{C}ech collecting semantics}
\label{sec:filtered-semantics}

We now set up the general construction.  The time parameter is a set equipped with a filter, and observations take values in a Tychonoff space.  Nothing special is built into the notation for sequential time: ordinary streams are recovered simply by taking the cofinite filter on \(\N\).

\begin{definition}[Filtered time]
A filtered time set is a pair \((T,\Filt)\), where \(T\) is a set and \(\Filt\) is a proper filter on \(T\).  Elements of \(\Filt\) are called large sets, or tail sets.
\end{definition}

The main examples are:
\begin{enumerate}[label=(\roman*),leftmargin=2.5em]
\item \(T=\N\), with the cofinite filter;
\item a directed poset \((T,\leq)\), with the cofinal filter generated by the sets \(\{t\in T:t\geq t_0\}\);
\item a set of finite configurations of a concurrent system, ordered by extension, with a cofinal or fairness-refined filter.
\end{enumerate}

The set \(T\) is regarded as discrete.  Hence \(\beta T\) is the Stone space of ultrafilters on \(T\).  For \(A\subseteq T\), write
\[
  \widehat A=\{\mathcal U\in\beta T:A\in\mathcal U\}.
\]
These sets are clopen and form a base for the Stone topology.  Define
\[
  \Ext(\Filt)
  =
  \{\mathcal U\in\beta T:\Filt\subseteq\mathcal U\}
  =
  \bigcap_{F\in\Filt}\widehat F.
\]
Thus \(\Ext(\Filt)\) is the compact set of ultrafilters extending the filter \(\Filt\).

\begin{definition}[Filtered Stone--\v{C}ech meaning]
Let \((T,\Filt)\) be a filtered time set, let \(X\) be a Tychonoff space, and let \(a:T\to X\) be a run.  Since \(T\) is discrete, \(a\) is continuous and extends uniquely to
\[
  \beta a:\beta T\to\beta X.
\]
The Stone--\v{C}ech meaning of \(a\), relative to \(\Filt\), is
\[
  \M_X^{\Filt}(a)
  =
  \beta a[\Ext(\Filt)]
  \subseteq \beta X.
\]
\end{definition}

For \(T=\N\) with the cofinite filter, \(\Ext(\Filt)=\N^*\), the set of free ultrafilters on \(\N\).  In that case we write
\[
  \M_X(a)=\beta a[\N^*].
\]

\begin{theorem}[Tail-closure theorem]
\label{thm:tail-closure}
Let \((T,\Filt)\) be a filtered time set and let \(a:T\to X\) be a run into a Tychonoff space \(X\).  Then
\[
  \M_X^{\Filt}(a)
  =
  \bigcap_{F\in\Filt}\cl_{\beta X} a[F].
\]
In particular, \(\M_X^{\Filt}(a)\) is a nonempty compact subset of \(\beta X\).
\end{theorem}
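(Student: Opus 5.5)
We prove the two inclusions separately, using only that \(\beta a\) is continuous, that \(\beta T\) and \(\beta X\) are compact Hausdorff, and that \(\Ext(\Filt)=\bigcap_{F\in\Filt}\widehat F\). The key identity is that for each \(F\subseteq T\) the clopen set \(\widehat F\) equals \(\cl_{\beta T}F\), where \(F\) is identified with principal ultrafilters. This holds because principal ultrafilters are dense in every basic clopen set, and \(\widehat F\) is closed.

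\emph{Inclusion \(\subseteq\).} Let \(\mathcal U\in\Ext(\Filt)\) and \(F\in\Filt\). Then \(\mathcal U\in\widehat F=\cl_{\beta T}F\). By continuity,
\[
  \beta a(\mathcal U)\in\beta a[\cl_{\beta T}F]\subseteq\cl_{\beta X}\beta a[F]=\cl_{\beta X}a[F],
\]
since \(\beta a\) extends \(a\). As \(F\) was arbitrary, \(\beta a(\mathcal U)\) lies in the intersection.

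\emph{Inclusion \(\supseteq\).} This is the step needing compactness. First, \(\beta a[\widehat F]\) is compact, being the continuous image of a closed subset of the compact space \(\beta T\). It is therefore closed in the Hausdorff space \(\beta X\), and it contains \(a[F]\). Hence
\[
  \cl_{\beta X}a[F]\subseteq\beta a[\widehat F],
\]
and in fact equality holds by the previous paragraph. Now fix \(p\in\bigcap_{F\in\Filt}\cl_{\beta X}a[F]\) and consider the closed sets \(K_F=\widehat F\cap(\beta a)^{-1}(p)\subseteq\beta T\). Each \(K_F\) is nonempty, since \(p\in\beta a[\widehat F]\). The family is closed under finite intersections up to containment, because \(\widehat{F_1}\cap\widehat{F_2}=\widehat{F_1\cap F_2}\) and \(F_1\cap F_2\in\Filt\). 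So the family \(\{K_F\}\) has the finite intersection property, and compactness of \(\beta T\) gives some \(\mathcal U\in\bigcap_F K_F\). Then \(\mathcal U\in\Ext(\Filt)\) and \(\beta a(\mathcal U)=p\). The main obstacle is keeping this finite-intersection argument honest, in particular remembering that the sets \(\widehat F\) are directed downward.

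\emph{Final claims.} Compactness of \(\M_X^{\Filt}(a)\) follows either as the image of the closed, hence compact, set \(\Ext(\Filt)\) under \(\beta a\), or as an intersection of closed sets in the compact space \(\beta X\). For nonemptiness, \(\Filt\) is proper, so every \(F\in\Filt\) is nonempty and the sets \(\widehat F\) are nonempty. They have the finite intersection property, so \(\Ext(\Filt)\neq\emptyset\) by compactness; equivalently, \(\Filt\) extends to an ultrafilter by the ultrafilter lemma. Its image under \(\beta a\) is therefore nonempty.
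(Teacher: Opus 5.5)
Your proof is correct. The inclusion \(\subseteq\) and the final compactness and nonemptiness claims are the same as in the paper. For \(\supseteq\) you take a different route.

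\textbf{Your route.} You first prove the intermediate identity \(\cl_{\beta X}a[F]=\beta a[\widehat F]\), using that a continuous map from a compact space to a Hausdorff space is closed. You then apply the finite intersection property to the fibre sets \(\widehat F\cap(\beta a)^{-1}(p)\). These are downward directed because \(\widehat{F_1}\cap\widehat{F_2}=\widehat{F_1\cap F_2}\).

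\textbf{The paper's route.} The paper never identifies tail closures with images. It intersects the sets \(\widehat F\) with the preimages \((\beta a)^{-1}(C)\) of all closed neighbourhoods \(C\) of \(p\). It checks the finite intersection property directly from \(p\in\cl_{\beta X}a[F_0]\). Only at the end does it use Hausdorffness, noting that \(p\) is the only point in all its closed neighbourhoods, to get \(\beta a(\mathcal U)=p\).

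\textbf{What each buys.} Your approach gives a cleaner structure and a stronger by-product: each tail closure is itself the compact set \(\beta a[\widehat F]\). The theorem then becomes the standard fact that a continuous map into a Hausdorff space commutes with intersections of downward-directed families of compact sets. Your finite-intersection check is also immediate from directedness. The paper's check tacitly relies on finitely many closed neighbourhoods of \(p\) intersecting to a closed neighbourhood. The paper's approach is more elementary in that it needs no closed-map lemma, only the neighbourhood description of closure.
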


\begin{proof}
The set \(\Ext(\Filt)\) is a closed subset of the compact space \(\beta T\).  It is nonempty because every proper filter extends to an ultrafilter.  Hence \(\beta a[\Ext(\Filt)]\) is nonempty and compact.

Let \(\mathcal U\in\Ext(\Filt)\).  If \(F\in\Filt\), then \(F\in\mathcal U\).  Since \(\widehat F\) is the closure of \(F\) in \(\beta T\), we have
\[
  \beta a(\mathcal U)\in \beta a[\widehat F]
  \subseteq \cl_{\beta X}a[F].
\]
Thus \(\M_X^{\Filt}(a)\) is contained in the displayed intersection.

Conversely, suppose
\[
  p\in \bigcap_{F\in\Filt}\cl_{\beta X}a[F].
\]
Consider the family of closed subsets of \(\beta T\)
\[
  \{\widehat F:F\in\Filt\}
  \cup
  \{(\beta a)^{-1}(C):C\text{ is a closed neighbourhood of }p\text{ in }\beta X\}.
\]
This family has the finite intersection property.  Indeed, after finitely many filter elements have been intersected, we obtain some \(F_0\in\Filt\).  If \(C\) is a closed neighbourhood of \(p\), then \(p\in\cl_{\beta X}a[F_0]\) implies \(a[F_0]\cap C\neq\varnothing\).  Hence \(\widehat{F_0}\cap(\beta a)^{-1}(C)\neq\varnothing\).  By compactness of \(\beta T\), there is \(\mathcal U\in\beta T\) lying in all members of the family.  Then \(\mathcal U\) extends \(\Filt\), and \(\beta a(\mathcal U)\) lies in every closed neighbourhood of \(p\).  Since \(\beta X\) is Hausdorff, \(\beta a(\mathcal U)=p\).  Thus \(p\in\M_X^{\Filt}(a)\).
\end{proof}

\begin{corollary}[Tail invariance]
\label{cor:tail-invariance}
Let \(a,b:T\to X\) be runs.  Suppose that
\[
  \{t\in T:a(t)=b(t)\}\in\Filt.
\]
Then
\[
  \M_X^{\Filt}(a)=\M_X^{\Filt}(b).
\]
In particular, for streams indexed by \(\N\), changing finitely many terms does not change the meaning.
\end{corollary}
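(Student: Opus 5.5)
The plan is to use \Cref{thm:tail-closure} and show that the two families of tail closures are mutually cofinal. Set
\[
  E=\{t\in T:a(t)=b(t)\}\in\Filt .
\]
For every \(F\in\Filt\) the set \(F\cap E\) also lies in \(\Filt\), and \(a[F\cap E]=b[F\cap E]\) because \(a\) and \(b\) agree on \(E\). By monotonicity of images and closures,
\[
  \cl_{\beta X}a[F\cap E]=\cl_{\beta X}b[F\cap E]\subseteq\cl_{\beta X}b[F].
\]

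From this, the argument runs as follows.
\begin{itemize}
  \item Intersecting the inclusion over all \(F\in\Filt\) gives
  \[
    \bigcap_{F\in\Filt}\cl_{\beta X}a[F]\subseteq\bigcap_{F\in\Filt}\cl_{\beta X}a[F\cap E]\subseteq\bigcap_{F\in\Filt}\cl_{\beta X}b[F].
  \]
  The first inclusion holds because \(F\cap E\) ranges over members of \(\Filt\).
  \item The left side is \(\M_X^{\Filt}(a)\) and the right side is \(\M_X^{\Filt}(b)\), both by \Cref{thm:tail-closure}.
  \item Exchanging the roles of \(a\) and \(b\) gives the reverse inclusion, hence equality.
\end{itemize}

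There is an alternative argument that avoids the theorem. Any \(\mathcal U\in\Ext(\Filt)\) contains \(E\). The maps \(\beta a\) and \(\beta b\) agree on \(E\), hence on its closure \(\widehat E\) in \(\beta T\) by continuity and the Hausdorff property of \(\beta X\). So they agree on \(\Ext(\Filt)\subseteq\widehat E\), and their images of \(\Ext(\Filt)\) coincide. I would mention this as a remark or use it directly.

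For the stream case, take the cofinite filter on \(\N\). Changing finitely many terms leaves \(\{n:a_n=b_n\}\) cofinite, so it belongs to the filter and the general statement applies.

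No real obstacle is expected. The only point to check is that intersecting with \(E\) stays inside \(\Filt\), which is just closure of a filter under finite intersections.
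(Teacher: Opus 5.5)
Your proposal is correct. Your alternative remark is the paper's proof: every $\mathcal U\in\Ext(\Filt)$ contains $E$, so $\beta a$ and $\beta b$ agree on $\Ext(\Filt)$ and have the same image. The paper asserts that agreement without comment. Your justification supplies exactly the step it leaves implicit: continuous maps into the Hausdorff space $\beta X$ that agree on $E$ also agree on $\cl_{\beta T}E=\widehat E\supseteq\Ext(\Filt)$.

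Your main argument is genuinely different. It makes the result a corollary of \Cref{thm:tail-closure} in the literal sense. It shows the two families of tail closures are mutually cofinal, using only monotonicity of images and closures and closure of $\Filt$ under finite intersections. Given the theorem, this is purely order-theoretic and never mentions ultrafilters. However, it relies on the compactness argument in the theorem's proof and yields only equality of the two image sets.

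The ultrafilter route uses nothing beyond continuity and the Hausdorff property. It also gives the stronger pointwise statement $\beta a(\mathcal U)=\beta b(\mathcal U)$ for all $\mathcal U\in\Ext(\Filt)$. This directly shows, for instance, that a translation relation $R^{\Filt}_{a,c}$ is unchanged when $a$ is replaced by $b$, since there the same ultrafilter is used for two observations at once.
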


\begin{proof}
Let \(E=\{t\in T:a(t)=b(t)\}\).  Every ultrafilter in \(\Ext(\Filt)\) contains \(E\).  Hence \(\beta a\) and \(\beta b\) agree on \(\Ext(\Filt)\), and their images of this set are equal.
\end{proof}

\begin{example}[Basic sequential tail shapes]
Let \(a:\N\to X\) be a stream.  Then
\[
  \M_X(a)
  =
  \bigcap_{N<\omega}\cl_{\beta X}\{a_n:n\geq N\}.
\]
If \(a_n\to x\) in \(X\), then \(\M_X(a)=\{x\}\).  If \(X\) is a finite discrete space, then \(\M_X(a)\) is the set of values taken infinitely often; in particular, a two-state alternation has the two recurrent states as its meaning.  If \(X=\N\) is discrete and \(a_n=n\), then \(\M_{\N}(a)=\N^*\).  The mixed stream \(a_{2n}=0\), \(a_{2n+1}=n\) has both a recurrent principal point, namely \(0\), and an escaping non-principal part.  These are the stream-level forms of the stable, finite recurrent, growing and mixed residual behaviours used later for processes.
\end{example}

\section{Observations and temporal readings}
\label{sec:observations}

The compact meaning is useful because observations can be applied after the compactification has been taken.  Continuous observations commute with the construction, and clopen observations give the elementary temporal reading used throughout the paper.

\begin{theorem}[Functoriality]
\label{thm:functoriality}
Let \(f:X\to Y\) be a continuous map between Tychonoff spaces.  For every filtered run \(a:T\to X\),
\[
  \M_Y^{\Filt}(f\circ a)
  =
  \beta f[\M_X^{\Filt}(a)].
\]
\end{theorem}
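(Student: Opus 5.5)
The plan is to reduce the statement to functoriality of \(\beta\) on maps, that is, to the identity
\[
  \beta(f\circ a)=\beta f\circ\beta a
  \quad\text{as maps }\beta T\to\beta Y .
\]
Once this identity is available, the theorem is a direct computation:
\[
  \M_Y^{\Filt}(f\circ a)
  =\beta(f\circ a)[\Ext(\Filt)]
  =\beta f\bigl[\beta a[\Ext(\Filt)]\bigr]
  =\beta f[\M_X^{\Filt}(a)].
\]
Only the definition of the meaning is used here; the filter \(\Filt\) plays no further role.

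To justify the identity I would use the uniqueness clause of the universal property of \(\beta T\). The run \(f\circ a:T\to Y\subseteq\beta Y\) is continuous, since \(T\) is discrete, and maps into the compact Hausdorff space \(\beta Y\). It therefore has a unique continuous extension \(\beta T\to\beta Y\), and that extension is by definition \(\beta(f\circ a)\).

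I then check that \(\beta f\circ\beta a\) is also such an extension. It is continuous as a composite of continuous maps. For \(t\in T\), identified with its principal ultrafilter, we have \(\beta a(t)=a(t)\in X\). Since \(\beta f\) extends \(f\) on the dense copy of \(X\), this gives \(\beta f(a(t))=f(a(t))\). Uniqueness of the extension, which holds because \(T\) is dense in \(\beta T\) and \(\beta Y\) is Hausdorff, then forces the two maps to agree.

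I expect no step to be a serious obstacle. The only point needing care is to be explicit that \(\beta f\) means the extension of \(f:X\to Y\hookrightarrow\beta Y\), and that the embeddings \(X\hookrightarrow\beta X\) and \(Y\hookrightarrow\beta Y\) are used consistently. This is what makes the square commute on \(T\).

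As a cross-check, one can instead argue through \Cref{thm:tail-closure}. The image \(\beta f[\M_X^{\Filt}(a)]\) is compact. Continuity of \(\beta f\) gives \(\beta f[\cl_{\beta X}a[F]]\subseteq\cl_{\beta Y}f[a[F]]\), so the image lies in \(\bigcap_{F}\cl_{\beta Y}f a[F]\). The reverse inclusion would need a compactness argument to pass the image through the intersection. Since the ultrafilter-image route avoids this entirely, I would present that route.
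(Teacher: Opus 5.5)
Your proposal is correct and follows the paper's proof: the paper likewise invokes \(\beta(f\circ a)=\beta f\circ\beta a\) and then computes the image of \(\Ext(\Filt)\) in one chain of equalities. The paper takes that identity for granted. Your justification of it, by uniqueness of continuous extensions that agree on the dense set \(T\) with Hausdorff codomain \(\beta Y\), is sound.
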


\begin{proof}
Since \(\beta(f\circ a)=\beta f\circ\beta a\),
\[
\begin{aligned}
  \M_Y^{\Filt}(f\circ a)
  &=\beta(f\circ a)[\Ext(\Filt)] \\
  &=(\beta f\circ\beta a)[\Ext(\Filt)] \\
  &=\beta f[\beta a[\Ext(\Filt)]] \\
  &=\beta f[\M_X^{\Filt}(a)].
\end{aligned}
\]
\end{proof}

\begin{example}[Resource observations]
Let \(a:\N\to X\) be a residual stream, and let \(r:X\to\N\) be a continuous resource observation, with \(\N\) discrete.  Functoriality gives
\[
  \M_{\N}(r\circ a)=\beta r[\M_X(a)].
\]
If the observed resource tends to infinity along the run, in the sense that for every \(m\) one eventually has \(r(a_n)>m\), then
\[
  \M_{\N}(r\circ a)\subseteq\N^*.
\]
Thus unbounded growth can be detected by an ordinary resource map even when no individual residual state recurs.  This is the form used later for the growing CCS process.
\end{example}

For a filter \(\Filt\) on \(T\), a subset \(B\subseteq T\) is called \(\Filt\)-positive if
\[
  B\cap F\neq\varnothing
\]
for every \(F\in\Filt\).  Equivalently, \(\Filt\cup\{B\}\) has the finite intersection property.

\begin{proposition}[Clopen temporal reading]
\label{prop:clopen-temporal}
Let \(a:T\to X\) be a filtered run and let \(C\subseteq X\) be clopen.  Then:
\begin{enumerate}[label=(\roman*),leftmargin=2.5em]
\item
\[
  \M_X^{\Filt}(a)\subseteq\cl_{\beta X}C
  \quad\Longleftrightarrow\quad
  a^{-1}(C)\in\Filt.
\]
\item
\[
  \M_X^{\Filt}(a)\cap\cl_{\beta X}C\neq\varnothing
  \quad\Longleftrightarrow\quad
  a^{-1}(C)\text{ is }\Filt\text{-positive}.
\]
\end{enumerate}
\end{proposition}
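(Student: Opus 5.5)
The plan is to reduce both parts to membership of \(a^{-1}(C)\) in individual ultrafilters, using the characteristic map of \(C\).

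Let \(\chi:X\to\{0,1\}\) be the characteristic function of \(C\), with \(\{0,1\}\) discrete. Since \(C\) is clopen, \(\chi\) is continuous. Its extension \(\beta\chi:\beta X\to\{0,1\}\) is continuous, so \(K=(\beta\chi)^{-1}(1)\) is clopen in \(\beta X\). By density of \(X\), \(K\) is the closure of \(K\cap X=C\). Hence \(K=\cl_{\beta X}C\).

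Next I identify which ultrafilters land in this clopen set. For \(\mathcal U\in\beta T\) we have \(\beta\chi(\beta a(\mathcal U))=\beta(\chi\circ a)(\mathcal U)\). The map \(\chi\circ a:T\to\{0,1\}\) is the characteristic function of \(A=a^{-1}(C)\). Its Stone--\v{C}ech extension takes the value \(1\) exactly on \(\widehat A\), because \(\widehat A\) is clopen and equals the closure of \(A\) in \(\beta T\). Therefore
\[
  \beta a(\mathcal U)\in\cl_{\beta X}C
  \quad\Longleftrightarrow\quad
  A\in\mathcal U .
\]
This equivalence is the key step. Everything else is filter combinatorics, since \(\M_X^{\Filt}(a)=\beta a[\Ext(\Filt)]\) by definition.

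\emph{Part (i).} The inclusion \(\M_X^{\Filt}(a)\subseteq\cl_{\beta X}C\) holds iff every \(\mathcal U\in\Ext(\Filt)\) contains \(A\). It remains to show this holds iff \(A\in\Filt\). The backward direction is immediate. For the forward direction, suppose \(A\notin\Filt\). Then \(T\setminus A\) is \(\Filt\)-positive: otherwise some \(F\in\Filt\) satisfies \(F\subseteq A\), which forces \(A\in\Filt\). So \(\Filt\cup\{T\setminus A\}\) has the finite intersection property and extends to an ultrafilter in \(\Ext(\Filt)\) that omits \(A\).

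\emph{Part (ii).} The meaning meets \(\cl_{\beta X}C\) iff some \(\mathcal U\in\Ext(\Filt)\) contains \(A\). Such a \(\mathcal U\) exists iff \(\Filt\cup\{A\}\) has the finite intersection property, which is exactly \(\Filt\)-positivity of \(A\).

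The only step needing care is the identification \((\beta\chi)^{-1}(1)=\cl_{\beta X}C\) and its analogue in \(\beta T\). Both follow from continuity and density: a clopen set in the compactification is the closure of its trace on the dense subspace.
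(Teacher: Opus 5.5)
Your proof is correct and follows the same route as the paper. It uses the characteristic map of \(C\), identifies \((\beta\chi)^{-1}(1)=\cl_{\beta X}C\), and reduces to the equivalence \(\beta a(\mathcal U)\in\cl_{\beta X}C \Leftrightarrow a^{-1}(C)\in\mathcal U\); you also justify that equivalence and write out the ultrafilter tests, both of which the paper only asserts as standard.
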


\begin{proof}
Since \(C\) is clopen, its characteristic map
\[
  \chi_C:X\to\{0,1\}
\]
is continuous, where \(\{0,1\}\) is discrete.  Its extension \(\beta\chi_C:\beta X\to\{0,1\}\) has
\[
  (\beta\chi_C)^{-1}(1)=\cl_{\beta X}C.
\]
For \(\mathcal U\in\Ext(\Filt)\),
\[
  \beta a(\mathcal U)\in\cl_{\beta X}C
  \quad\Longleftrightarrow\quad
  a^{-1}(C)\in\mathcal U.
\]
The rest is the standard ultrafilter test.  A set belongs to every ultrafilter extending \(\Filt\) exactly when it belongs to \(\Filt\); and it belongs to some ultrafilter extending \(\Filt\) exactly when it is \(\Filt\)-positive.
\end{proof}

\begin{remark}
The closure in \Cref{prop:clopen-temporal} is taken in \(\beta X\), not in \(X\).  Thus \(C\) may be closed in \(X\) while \(\cl_{\beta X}C\) contains additional remainder points.  Equivalently, \(\cl_{\beta X}C\) is the clopen subset of \(\beta X\) classified by the extended characteristic map \(\beta\chi_C:\beta X\to\{0,1\}\).
\end{remark}

When \(X\) is discrete, every subset is clopen.  For \(A\subseteq X\), write
\[
  \widehat A=\{\mathcal V\in\beta X:A\in\mathcal V\}.
\]
This is the closure of \(A\) in \(\beta X\).

\begin{corollary}[Eventual and recurrent readings in discrete spaces]
\label{thm:discrete-temporal}
Let \(X\) be discrete, let \(a:T\to X\), and let \(A\subseteq X\).  Then:
\begin{enumerate}[label=(\roman*),leftmargin=2.5em]
\item
\[
  \M_X^{\Filt}(a)\subseteq\widehat A
  \quad\Longleftrightarrow\quad
  a^{-1}(A)\in\Filt.
\]
\item
\[
  \M_X^{\Filt}(a)\cap\widehat A\neq\varnothing
  \quad\Longleftrightarrow\quad
  a^{-1}(A)\text{ is }\Filt\text{-positive}.
\]
\end{enumerate}
For \(T=\N\) with the cofinite filter, these say respectively that \(a_n\in A\) eventually always and that \(a_n\in A\) infinitely often.
\end{corollary}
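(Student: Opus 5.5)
The plan is to read the corollary directly off \Cref{prop:clopen-temporal}, taking \(C=A\). Since \(X\) is discrete, every subset \(A\subseteq X\) is clopen, so the proposition applies to \(A\).

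The only point needing comment is that the notation \(\widehat A\) agrees with the region \(\cl_{\beta X}A\) used in the proposition. For discrete \(X\), \(\beta X\) is the Stone space of ultrafilters on \(X\), and \(X\) is embedded via principal ultrafilters. The closure of \(A\) in this space is the basic clopen set \(\{\mathcal V:A\in\mathcal V\}\). The reason is that a basic neighbourhood \(\widehat B\) of \(\mathcal V\) meets \(A\) exactly when \(A\cap B\neq\varnothing\), and this holds for all \(B\in\mathcal V\) precisely when \(A\in\mathcal V\), since \(\mathcal V\) is an ultrafilter. Equivalently, \(\widehat A=(\beta\chi_A)^{-1}(1)\), as in the remark following the proposition. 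Substituting \(\widehat A\) for \(\cl_{\beta X}A\) in parts (i) and (ii) of \Cref{prop:clopen-temporal} gives the two displayed equivalences verbatim.

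For the final sentence, take \(T=\N\) with the cofinite filter. Then \(a^{-1}(A)\in\Filt\) means \(a^{-1}(A)\) is cofinite, that is, \(a_n\in A\) for all sufficiently large \(n\). Also, \(a^{-1}(A)\) is \(\Filt\)-positive iff it meets every cofinite set, iff it is infinite, that is, \(a_n\in A\) for infinitely many \(n\).

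No step is a real obstacle; the only thing to check with care is the identification \(\widehat A=\cl_{\beta X}A\), which is standard.
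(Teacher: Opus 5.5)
Your proposal is correct and takes the paper's route: the corollary is \Cref{prop:clopen-temporal} with \(C=A\), using that every subset of a discrete space is clopen and that \(\widehat A=\cl_{\beta X}A\). The paper asserts that identification without proof just before the statement, and you check it directly; your reading of the cofinite case (membership in \(\Filt\) as cofiniteness, \(\Filt\)-positivity as infiniteness) is also correct.
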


\begin{definition}[Tail modalities for discrete observations]
For a discrete observation space \(X\) and \(A\subseteq X\), define
\[
  \Box_\infty A
  \quad\text{by}\quad
  \M_X(a)\subseteq\widehat A,
\]
and
\[
  \Diamond_\infty A
  \quad\text{by}\quad
  \M_X(a)\cap\widehat A\neq\varnothing.
\]
For streams indexed by \(\N\), \(\Box_\infty A\) means eventual truth of \(A\), and \(\Diamond_\infty A\) means infinitely-often truth of \(A\).
\end{definition}

\begin{example}[Discrete tail shapes]
In a finite discrete space, the meaning of a stream is exactly the set of states visited infinitely often.  Thus the periodic stream \(a_n=n\bmod k\) into \(\{0,\ldots,k-1\}\) has meaning \(\{0,\ldots,k-1\}\).

In the discrete space \(\N\), the identity stream \(a_n=n\) has \(\M_\N(a)=\N^*\).  It has no recurrent principal value.  The mixed stream
\[
  a_{2n}=0,
  \qquad
  a_{2n+1}=n
\]
has \(0\) as a recurrent principal point and also has an escaping non-principal part contributed by the unbounded odd-indexed values.  These three cases correspond respectively to finite recurrent residual behaviour, pure unbounded residual growth, and a mixture of recurrence and escape.
\end{example}

Sometimes two observations of the same run are related by sharing the same time index rather than by a map between observation spaces.

\begin{definition}[Translation relation]
Let \(a:T\to X\) and \(b:T\to Y\) be two observations of the same filtered time set.  Their Stone--\v{C}ech translation relation is
\[
  R_{a,b}^{\Filt}
  =
  \{(\beta a(\mathcal U),\beta b(\mathcal U)):
      \mathcal U\in\Ext(\Filt)\}
  \subseteq
  \beta X\times\beta Y.
\]
\end{definition}

\begin{proposition}[When the translation relation is functional]
\label{prop:functional-representation}
The relation \(R_{a,b}^{\Filt}\) is the graph of a function
\[
  \tau:\M_X^{\Filt}(a)\to\M_Y^{\Filt}(b)
\]
if and only if, for all \(\mathcal U,\mathcal V\in\Ext(\Filt)\),
\[
  \beta a(\mathcal U)=\beta a(\mathcal V)
  \quad\Longrightarrow\quad
  \beta b(\mathcal U)=\beta b(\mathcal V).
\]
When this holds, \(\tau\) is continuous.
\end{proposition}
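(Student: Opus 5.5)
The plan has two parts: the set-theoretic equivalence, and then continuity via a closed-graph argument in compact Hausdorff spaces.

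\emph{The equivalence.} By definition, the first coordinate of \(R_{a,b}^{\Filt}\) ranges over \(\beta a[\Ext(\Filt)]=\M_X^{\Filt}(a)\) and the second over \(\beta b[\Ext(\Filt)]=\M_Y^{\Filt}(b)\). So \(R_{a,b}^{\Filt}\) is always a relation between these two sets, and it is total on \(\M_X^{\Filt}(a)\). It is the graph of a function exactly when each \(p\in\M_X^{\Filt}(a)\) has a unique partner. Uniqueness unfolds literally to the displayed condition: whenever \(\beta a(\mathcal U)=\beta a(\mathcal V)\), we need \(\beta b(\mathcal U)=\beta b(\mathcal V)\). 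Both directions of the equivalence are therefore immediate. Under the condition, \(\tau(\beta a(\mathcal U))=\beta b(\mathcal U)\) is well defined, and it maps into \(\M_Y^{\Filt}(b)\), indeed onto it.

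\emph{Continuity.} First, \(\Ext(\Filt)\) is compact, being closed in \(\beta T\), and the map \(\mathcal U\mapsto(\beta a(\mathcal U),\beta b(\mathcal U))\) is continuous into \(\beta X\times\beta Y\). Hence \(R_{a,b}^{\Filt}\) is a compact, and so closed, subset of the compact Hausdorff space \(\M_X^{\Filt}(a)\times\M_Y^{\Filt}(b)\). The compactness of \(\M_X^{\Filt}(a)\) and \(\M_Y^{\Filt}(b)\) comes from \Cref{thm:tail-closure}.

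It then remains to invoke the closed graph theorem for compact Hausdorff codomains: a function into a compact Hausdorff space with closed graph is continuous. To prove it, let \(K\subseteq\M_Y^{\Filt}(b)\) be closed. Then
\[
  \tau^{-1}(K)=\pi_1\bigl(R_{a,b}^{\Filt}\cap(\M_X^{\Filt}(a)\times K)\bigr).
\]
The set inside the projection is compact, so its image under \(\pi_1\) is compact and hence closed. This gives continuity of \(\tau\).

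An alternative route is to view \(\tau\) via the quotient: \(\beta a\restriction\Ext(\Filt)\) is a continuous surjection from a compact space onto a Hausdorff space, hence a closed map and so a quotient map. Then \(\tau\circ\beta a=\beta b\) on \(\Ext(\Filt)\) is continuous, which forces \(\tau\) to be continuous.

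The only step needing care is continuity, specifically making sure that compactness and Hausdorffness of the meanings are available. These come directly from \Cref{thm:tail-closure} and the fact that \(\beta X\) and \(\beta Y\) are Hausdorff, so I expect no real obstacle.
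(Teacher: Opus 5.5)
Your proof is correct and follows essentially the paper's route. The equivalence is the same unfolding of uniqueness of second coordinates. Your closed-graph computation $\tau^{-1}(K)=\pi_1\bigl(R_{a,b}^{\Filt}\cap(\M_X^{\Filt}(a)\times K)\bigr)$ amounts to the paper's observation that $\pi_1$, restricted to the compact set $R_{a,b}^{\Filt}$, is a continuous bijection onto the Hausdorff space $\M_X^{\Filt}(a)$; it is therefore a closed map and a homeomorphism, so $\tau=\pi_2\circ\pi_1^{-1}$ is continuous. Your quotient-map alternative through $\beta a$ on $\Ext(\Filt)$ rests on the same compact-to-Hausdorff closed-map principle and is equally valid.
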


\begin{proof}
The displayed condition says exactly that the second coordinate of a point of \(R_{a,b}^{\Filt}\) is determined by the first coordinate.  Thus it is equivalent to \(R_{a,b}^{\Filt}\) being the graph of a function from its first projection to its second projection.  These projections are precisely \(\M_X^{\Filt}(a)\) and \(\M_Y^{\Filt}(b)\).

The relation \(R_{a,b}^{\Filt}\) is compact, since it is the continuous image of the compact space \(\Ext(\Filt)\) under the map
\[
  \mathcal U\mapsto(\beta a(\mathcal U),\beta b(\mathcal U)).
\]
If the relation is functional, then the first projection
\[
  \pi_1:R_{a,b}^{\Filt}\to \M_X^{\Filt}(a)
\]
is a continuous bijection from a compact space to a Hausdorff space, hence a homeomorphism.  The induced function is \(\tau=\pi_2\circ\pi_1^{-1}\), and is continuous.
\end{proof}

\section{Nondeterminism, may/must satisfaction, and refinement}
\label{sec:nondeterminism}

A nondeterministic or concurrent program typically has many runs.  Let \(P\) be a program and let \(\Runs(P)\) denote its set of possible runs, each observed as a map \(a:T\to X\).  We shall use two levels of collection.

\begin{definition}[Run-sensitive and flattened meanings]
The run-sensitive Stone--\v{C}ech semantics of \(P\) is
\[
  \mathcal M_X^{\Filt}(P)
  =
  \{\M_X^{\Filt}(a):a\in\Runs(P)\}
  \subseteq \K(\beta X),
\]
where \(\K(\beta X)\) denotes the hyperspace of nonempty compact subsets of \(\beta X\).  The flattened semantics is
\[
  \bigcup\mathcal M_X^{\Filt}(P)
  =
  \bigcup_{a\in\Runs(P)}\M_X^{\Filt}(a)
  \subseteq \beta X.
\]
\end{definition}

The distinction matters.  Flattening remembers which ultralimit points are possible, but forgets which points came from the same run.  This is the familiar loss of correlation in collecting semantics.

\begin{definition}[May and must satisfaction]
Let \(S\subseteq\beta X\) be a specification region.  For a single run \(a:T\to X\), define
\[
  a\models_{\must}S
  \quad\Longleftrightarrow\quad
  \M_X^{\Filt}(a)\subseteq S,
\]
and
\[
  a\models_{\may}S
  \quad\Longleftrightarrow\quad
  \M_X^{\Filt}(a)\cap S\neq\varnothing.
\]
For a program \(P\), one may then quantify over runs in the usual demonic or angelic ways.
\end{definition}

In the discrete case, with \(S=\widehat A\), \Cref{thm:discrete-temporal} says that must satisfaction is eventual truth, while may satisfaction is infinitely-often truth, for a single run.  For a nondeterministic program, a demonic reading requires all runs to satisfy the relevant must condition; an angelic reading requires some run to satisfy the relevant may condition.  This is close in shape to the may/must distinctions in nondeterministic powerdomain semantics \cite{Plotkin1976,Smyth1978,AbramskyJung1994}.

\begin{definition}[Asymptotic refinement]
For programs \(P,Q\) with flattened meanings in \(\beta X\), define
\[
  P\preceq_{\flat} Q
  \quad\Longleftrightarrow\quad
  \bigcup\mathcal M_X(P)\subseteq\bigcup\mathcal M_X(Q).
\]
For run-sensitive semantics, define
\[
  P\preceq_{\K} Q
  \quad\Longleftrightarrow\quad
  \forall K\in\mathcal M_X(P)\ \exists L\in\mathcal M_X(Q)\ (K\subseteq L).
\]
Other orders are possible, depending on how nondeterminism is read.
\end{definition}

\section{Concurrency and fairness}
\label{sec:fairness}

The filtered formulation is useful when the relevant notion of eventuality is not simply the cofinite filter on \(\N\).  In an interleaving model one may still use \(T=\N\), with a run depending on a chosen schedule.  For partially ordered or true-concurrency models, \(T\) may instead be a set of finite configurations ordered by extension.

For example, in an event-structure model, configurations are finite sets of events which have occurred and are consistent with the causality and conflict relation \cite{WinskelNielsen1995}.  Along a directed family of configurations, the cofinal filter is generated by the sets
\[
  \uparrow t_0=\{t\in T:t\geq t_0\}.
\]
A run \(a:T\to X\) then assigns an observable value to each sufficiently developed configuration.

Fairness and progress assumptions can be represented by strengthening the filter.  Let \((T,\Filt)\) be a filtered time set, and let
\[
  \mathcal E=\{E_i:i\in I\}
\]
be a family of subsets of \(T\).  The reading is that each \(E_i\) is a set of times at which a specified progress, scheduling, or fairness requirement has been met.

\begin{definition}[Generated fair filter]
Suppose that \(\Filt\cup\mathcal E\) has the finite intersection property.  The filter generated by \(\Filt\) and the sets \(E_i\) is
\[
  \Filt_{\mathcal E}
  =
  \{B\subseteq T:
    \text{for some }F\in\Filt
    \text{ and finite }J\subseteq I,
    \ F\cap\bigcap_{j\in J}E_j\subseteq B\}.
\]
The fair meaning of a run \(a:T\to X\) is
\[
  \M_{X,\mathcal E}^{\Filt}(a)
  =
  \M_X^{\Filt_{\mathcal E}}(a).
\]
\end{definition}

\begin{proposition}[Fair ultralimit semantics]
\label{prop:fair-ultralimit}
With notation as above,
\[
  \M_{X,\mathcal E}^{\Filt}(a)
  =
  \{\beta a(\mathcal U):
      \mathcal U\in\Ext(\Filt),
      \ E_i\in\mathcal U\text{ for all }i\in I\}.
\]
This set is nonempty and compact exactly when \(\Filt\cup\mathcal E\) has the finite intersection property.
\end{proposition}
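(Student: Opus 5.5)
The plan is to reduce the statement to the definition of \(\M_X^{\Filt_{\mathcal E}}(a)=\beta a[\Ext(\Filt_{\mathcal E})]\), so the only real content is an identification of ultrafilter sets:
\[
  \Ext(\Filt_{\mathcal E})
  =
  \{\mathcal U\in\Ext(\Filt):E_i\in\mathcal U\text{ for all }i\in I\}.
\]
Applying \(\beta a\) to both sides then gives the displayed formula.

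For the inclusion \(\subseteq\), I note that \(\Filt\subseteq\Filt_{\mathcal E}\), by taking \(J=\varnothing\). Likewise each \(E_i\in\Filt_{\mathcal E}\), by taking \(F=T\) and \(J=\{i\}\). Hence any ultrafilter containing \(\Filt_{\mathcal E}\) extends \(\Filt\) and contains every \(E_i\).

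For the inclusion \(\supseteq\), let \(\mathcal U\) extend \(\Filt\) and contain all \(E_i\). If \(B\in\Filt_{\mathcal E}\), then \(B\supseteq F\cap\bigcap_{j\in J}E_j\) for some \(F\in\Filt\) and finite \(J\subseteq I\). This finite intersection lies in \(\mathcal U\), since \(\mathcal U\) is closed under finite intersections. By upward closure, \(B\in\mathcal U\).

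For the second sentence, I first treat the case where \(\Filt\cup\mathcal E\) has the finite intersection property. Then \(\Filt_{\mathcal E}\) is a proper filter, since every generator \(F\cap\bigcap_J E_j\) is nonempty, so \(\varnothing\notin\Filt_{\mathcal E}\). Upward closure and closure under finite intersections are immediate from the definition. So \((T,\Filt_{\mathcal E})\) is a filtered time set, and \Cref{thm:tail-closure} gives nonemptiness and compactness directly.

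Conversely, suppose the right-hand set is nonempty. Then some ultrafilter contains \(\Filt\cup\mathcal E\), so this family has the finite intersection property. Compactness on its own holds in every case. The right-hand set is \(\beta a\) applied to \(\Ext(\Filt)\cap\bigcap_i\widehat{E_i}\), which is a closed subset of \(\beta T\). The image of a compact set under a continuous map is compact, so the set is always compact, with the empty set counting as trivially compact.

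The main obstacle is interpretive rather than technical. The "exactly when" really concerns nonemptiness, because compactness holds unconditionally. In addition, the definition of \(\Filt_{\mathcal E}\) presupposes the finite intersection property. Outside that hypothesis I would read the left-hand side via the right-hand description, or via the possibly improper "filter" \(\pow(T)\), whose \(\Ext\) is empty. I would state this convention explicitly so that the biconditional is meaningful.
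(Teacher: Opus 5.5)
Your proof is correct and follows essentially the same route as the paper's. You identify \(\Ext(\Filt_{\mathcal E})\) with the ultrafilters that extend \(\Filt\) and contain every \(E_i\). Nonemptiness and compactness then follow from properness of \(\Filt_{\mathcal E}\) under the finite intersection property, and the converse follows because an ultrafilter containing \(\Filt\cup\mathcal E\) witnesses that property. You are more explicit than the paper, which simply cites that \(\Filt_{\mathcal E}\) is the smallest filter containing the generators. Your remark that compactness holds unconditionally, so that the biconditional really concerns nonemptiness, is an accurate sharpening of the paper's phrasing.
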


\begin{proof}
By construction, \(\Filt_{\mathcal E}\) is the smallest filter on \(T\) containing \(\Filt\) and all the sets \(E_i\).  Hence an ultrafilter extends \(\Filt_{\mathcal E}\) exactly when it extends \(\Filt\) and contains every \(E_i\).  The displayed formula follows from the definition of filtered meaning.

The filter \(\Filt_{\mathcal E}\) is proper exactly when \(\Filt\cup\mathcal E\) has the finite intersection property.  In that case it has an ultrafilter extension, and the meaning is the continuous image of the compact set \(\Ext(\Filt_{\mathcal E})\).  Conversely, if the finite intersection property fails, no ultrafilter can contain all the required sets.
\end{proof}

The content of a particular fairness condition lies in the choice of the family \(\mathcal E\).  A single set \(E\) may express one progress event, such as occurrence of a particular action beyond a given reference point.  Repeated progress or standard fairness conditions require a family of sets.

\subsection{Unbounded component progress}

Let \(T\) be a set of finite prefixes of an interleaved execution, ordered by extension.  Suppose the system has components indexed by a set \(I\).  For \(i\in I\) and \(k\in\N\), let
\[
  P_{i,k}\subseteq T
\]
be the set of prefixes in which component \(i\) has taken at least \(k\) steps.  Requiring all sets \(P_{i,k}\) to belong to the admissible ultrafilter expresses unbounded progress of component \(i\).  For a family of components \(I_0\subseteq I\), set
\[
  \mathcal E_{\mathrm{prog}}
  =
  \{P_{i,k}:i\in I_0,\ k\in\N\}.
\]
If \(\Filt\cup\mathcal E_{\mathrm{prog}}\) has the finite intersection property, the corresponding progress-fair meaning is
\[
  \M_X^{\Filt_{\mathcal E_{\mathrm{prog}}}}(a).
\]
This condition rules out asymptotic views of the execution in which a component in \(I_0\) is seen to make only bounded progress.

\subsection{Weak and strong fairness families}

A usual weak-fairness, or justice, condition concerns actions or obligations that remain continuously enabled.  A filtered version can be described as follows.  Let \(T\) be a prefix-ordered set of finite executions.  Let \(J\) be a set of obligations.  For \(j\in J\), suppose we have predicates
\[
  \operatorname{En}_j(t)
  \quad\text{and}\quad
  \operatorname{Done}_j(t,u)
\]
where \(t\leq u\) in \(T\).  The readings are that obligation \(j\) is enabled at prefix \(t\), and that \(j\) is discharged somewhere between \(t\) and \(u\).

For \(j\in J\) and \(t_0\in T\), define
\[
  W_{j,t_0}
  =
  \{u\in T:u\geq t_0
      \text{ and either }
      \exists v\ (t_0\leq v\leq u\text{ and }\neg\operatorname{En}_j(v))
      \text{ or }
      \operatorname{Done}_j(t_0,u)\}.
\]
Thus \(W_{j,t_0}\) contains those later prefixes \(u\) at which either the continuous enabledness of \(j\) from \(t_0\) to \(u\) has failed, or the obligation has been discharged by \(u\).  The weak-fairness family is
\[
  \mathcal E_{\mathrm{wf}}
  =
  \{W_{j,t_0}:j\in J,
  \, t_0\in T\}.
\]
When \(\Filt\cup\mathcal E_{\mathrm{wf}}\) has the finite intersection property, the corresponding weak-fair meaning of \(a:T\to X\) is \(\M_X^{\Filt_{\mathcal E_{\mathrm{wf}}}}(a)\).

Strong fairness, or compassion, can be expressed by changing the indexing family.  For example, let \(En_{j,k}\subseteq T\) be the set of prefixes by which obligation \(j\) has been enabled at least \(k\) times, and let \(Do_{j,k}\subseteq T\) be the set of prefixes by which \(j\) has been discharged at least \(k\) times.  A strong-fairness family may include requirements
\[
  S_{j,k}
  =
  (T\setminus En_{j,k})\cup Do_{j,k}.
\]
Requiring all \(S_{j,k}\) says that an asymptotic view which sees arbitrarily many enablings of \(j\) must also see arbitrarily many discharges of \(j\), subject to consistency of the resulting filter.

\subsection{Fair observations}

The clopen temporal reading applies unchanged after the filter has been strengthened.  If \(C\subseteq X\) is clopen, then
\[
  \M_X^{\Filt_{\mathcal E}}(a)\subseteq\cl_{\beta X}C
\]
holds exactly when \(a^{-1}(C)\in\Filt_{\mathcal E}\), and
\[
  \M_X^{\Filt_{\mathcal E}}(a)\cap\cl_{\beta X}C\neq\varnothing
\]
holds exactly when \(a^{-1}(C)\) is positive with respect to \(\Filt_{\mathcal E}\).

\section{Compositionality}
\label{sec:compositionality}

Only some stream operations descend to Stone--\v{C}ech meanings.  Pointwise continuous postprocessing does, by \Cref{thm:functoriality}.  For other operations the following elementary criterion is useful.

Let \(S_X\) be a class of \(X\)-valued streams and \(S_Y\) a class of \(Y\)-valued streams.  A stream transformer is a map \(F:S_X\to S_Y\).  Let
\[
  q_X:S_X\to\K(\beta X),
  \qquad q_X(a)=\M_X(a),
\]
and similarly for \(q_Y\).

\begin{theorem}[Descent criterion]
\label{thm:descent}
A stream transformer \(F:S_X\to S_Y\) induces a well-defined map
\[
  \overline F:q_X[S_X]\to q_Y[S_Y]
\]
satisfying
\[
  \overline F(\M_X(a))=\M_Y(F(a))
\]
for all \(a\in S_X\) if and only if
\[
  \M_X(a)=\M_X(b)
  \quad\Longrightarrow\quad
  \M_Y(F(a))=\M_Y(F(b))
\]
for all \(a,b\in S_X\).
\end{theorem}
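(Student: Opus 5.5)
The descent criterion is essentially the statement that a function factors through a surjection exactly when it is constant on the fibres of that surjection. I will apply this to the composite \(q_Y\circ F:S_X\to\K(\beta Y)\) and the surjection \(q_X:S_X\to q_X[S_X]\). No topology is needed. The compactness and nonemptiness from \Cref{thm:tail-closure} only guarantee that \(q_X\) and \(q_Y\) take values in \(\K(\beta X)\) and \(\K(\beta Y)\) as claimed.

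\emph{Necessity.} Suppose \(\overline F\) exists with \(\overline F(\M_X(a))=\M_Y(F(a))\) for every \(a\in S_X\). If \(\M_X(a)=\M_X(b)\), then, since \(\overline F\) is a function,
\[
  \M_Y(F(a))=\overline F(\M_X(a))=\overline F(\M_X(b))=\M_Y(F(b)).
\]

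\emph{Sufficiency.} Assume the displayed implication. Every \(K\in q_X[S_X]\) has the form \(K=\M_X(a)\) for some \(a\in S_X\). Define
\[
  \overline F(K)=\M_Y(F(a)).
\]
This does not depend on the choice of \(a\): if also \(K=\M_X(b)\), the hypothesis gives \(\M_Y(F(a))=\M_Y(F(b))\). The value \(\M_Y(F(a))\) lies in \(q_Y[S_Y]\) because \(F(a)\in S_Y\), so \(\overline F\) has the stated codomain. The defining equation then holds by construction.

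\emph{Uniqueness.} The map \(\overline F\) is also uniquely determined, because \(q_X\) maps \(S_X\) onto \(q_X[S_X]\). This is worth noting, although the statement does not ask for it.

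I do not expect a genuine obstacle. The only point needing care is well-definedness of \(\overline F\), in particular that it is defined on the image \(q_X[S_X]\) rather than on all of \(\K(\beta X)\), and the hypothesis is precisely what is needed for it.
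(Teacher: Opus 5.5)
Your proof is correct and matches the paper's argument. Necessity holds because \(\overline F\) is a function. For sufficiency, you set \(\overline F(K)=\M_Y(F(a))\) for any \(a\) with \(K=\M_X(a)\), and the hypothesis makes this independent of the choice of \(a\); your checks of the codomain and of uniqueness are valid additions that the paper leaves implicit.
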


\begin{proof}
If such a map \(\overline F\) exists, the implication is immediate.  Conversely, if the implication holds, define
\[
  \overline F(K)=\M_Y(F(a))
\]
where \(a\) is any stream such that \(K=\M_X(a)\).  The implication says exactly that this definition is independent of the choice of \(a\).
\end{proof}

\begin{example}[Time-sensitive operations]
Let \(X=\{0,1\}\), and let \(F\) output the even-indexed subsequence.  Two streams may have the same meaning \(\{0,1\}\) while their even subsequences have different meanings.  Thus \(F\) need not descend to Stone--\v{C}ech meanings.
\end{example}

\section{Relational meanings}
\label{sec:relational}

Many program properties are relational.  Determinism, noninterference, refinement, confluence, agreement between replicas, and absence of certain races are not simply unary properties of individual states.  They can be treated here by taking meanings of tuples of runs.

Given streams \(a_i:T\to X_i\) for \(i=1,\ldots,k\), define
\[
  \langle a_1,\ldots,a_k\rangle:T\to X_1\times\cdots\times X_k
\]
by pointwise tupling.  The relational meaning is
\[
  \M_{X_1\times\cdots\times X_k}^{\Filt}(a_1,\ldots,a_k)
  =
  \M_{X_1\times\cdots\times X_k}^{\Filt}(\langle a_1,\ldots,a_k\rangle)
  \subseteq
  \beta(X_1\times\cdots\times X_k).
\]

The product is formed before compactification, so correlations along the same ultrafilter on time are retained.

\begin{example}[Correlation lost by products]
Let
\[
  a_n=n\bmod 2,
  \qquad
  b_n=1-a_n,
\]
with values in \(\{0,1\}\).  Individually, \(\M(a)=\M(b)=\{0,1\}\).  But the paired stream has meaning
\[
  \M_{\{0,1\}^2}(a,b)=\{(0,1),(1,0)\}.
\]
The product of the individual meanings is the four-point set \(\{0,1\}^2\), which includes \((0,0)\) and \((1,1)\).  Those two pairs are not present along the paired run.
\end{example}

For a relation \(R\subseteq X^k\) in a discrete observation space, \Cref{thm:discrete-temporal} says that
\[
  \M_{X^k}(a_1,\ldots,a_k)\subseteq\widehat R
\]
means eventual satisfaction of \(R(a_1(n),\ldots,a_k(n))\), while nonempty intersection with \(\widehat R\) means recurrent satisfaction.

\section{Finite observational abstractions}
\label{sec:finite-abstractions}

This section gives the computational interface to the compact semantics.  The Stone--\v{C}ech meaning is a compact semantic carrier for tail behaviour, but its effective use is through continuous observations.  After a finite family of clopen predicates has been chosen, every run has a finite quotient trace, and the meaning of that quotient trace is exactly the image of the original compact meaning under the induced finite observation map.  The finite calculation is therefore a quotient of the compact semantics, not a separate approximation scheme.

In practice one does not try to enumerate points of \(\beta X\), or points of the remainder \(X^*\).  Instead one asks finite or resource-level questions about the run: which process predicates recur, which abstract residual states are eventually unavoidable, whether a resource observation escapes every finite bound, and how fairness changes the recurrent part.  The functoriality theorem says that all these answers are obtained by applying continuous observations to the same compact tail object.

Let \(X\) be a Tychonoff space, and let
\[
  \mathcal A=\{A_1,\ldots,A_m\}
\]
be a finite family of clopen observable predicates on \(X\).  Let \(\mathcal B_{\mathcal A}\) be the finite Boolean algebra generated by the sets \(A_i\).  Define an equivalence relation on \(X\) by
\[
  x\equiv_{\mathcal A}y
  \quad\Longleftrightarrow\quad
  \forall B\in\mathcal B_{\mathcal A}\ (x\in B\Leftrightarrow y\in B).
\]
Let
\[
  q_{\mathcal A}:X\to X_{\mathcal A}=X/{\equiv_{\mathcal A}}
\]
be the quotient map.  The set \(X_{\mathcal A}\) is finite and is given the discrete topology.  Since the predicates in \(\mathcal B_{\mathcal A}\) are clopen, the map \(q_{\mathcal A}\) is continuous.

\begin{proposition}[Finite observational abstraction]
\label{prop:finite-observational-abstraction}
Let \(a:T\to X\) be a filtered run, and let \(q_{\mathcal A}:X\to X_{\mathcal A}\) be the finite observational quotient defined above.  Then
\[
  \M_{X_{\mathcal A}}^{\Filt}(q_{\mathcal A}\circ a)
  =
  \beta q_{\mathcal A}[\M_X^{\Filt}(a)].
\]
Moreover, since \(X_{\mathcal A}\) is finite discrete,
\[
  \M_{X_{\mathcal A}}^{\Filt}(q_{\mathcal A}\circ a)
  =
  \{C\in X_{\mathcal A}:(q_{\mathcal A}\circ a)^{-1}(C)
      \text{ is }\Filt\text{-positive}\}.
\]
For sequential time, this is the set of abstract states visited infinitely often.
\end{proposition}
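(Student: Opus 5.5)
The first identity is a direct instance of \Cref{thm:functoriality}, applied with \(Y=X_{\mathcal A}\) and \(f=q_{\mathcal A}\).  Two hypotheses need checking.  First, \(X_{\mathcal A}\) is finite discrete, hence compact Hausdorff and in particular Tychonoff.  Second, \(q_{\mathcal A}\) is continuous.  For continuity, each equivalence class \(C\in X_{\mathcal A}\) has preimage \(q_{\mathcal A}^{-1}(C)\), which is an atom of the finite Boolean algebra \(\mathcal B_{\mathcal A}\).  It is therefore a finite Boolean combination of the clopen sets \(A_i\), and so is clopen.  With both hypotheses in place, functoriality gives the first displayed equation.

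For the second identity, the plan is to use the fact that a compact Hausdorff space is its own Stone--\v{C}ech compactification.  So \(\beta X_{\mathcal A}=X_{\mathcal A}\), and every point \(C\) is principal, with \(\widehat{\{C\}}=\{C\}\).  Writing \(b=q_{\mathcal A}\circ a\), we then have \(C\in\M_{X_{\mathcal A}}^{\Filt}(b)\) if and only if \(\M_{X_{\mathcal A}}^{\Filt}(b)\cap\widehat{\{C\}}\neq\varnothing\).  By \Cref{thm:discrete-temporal}(ii), applied with \(A=\{C\}\), this holds if and only if \(b^{-1}(\{C\})\) is \(\Filt\)-positive.  This preimage is the set \((q_{\mathcal A}\circ a)^{-1}(C)\) in the statement, read as the preimage of a singleton.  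The argument thus stays at the level of points of \(X_{\mathcal A}\), with no remainder to worry about.

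An alternative, more hands-on route goes through \Cref{thm:tail-closure}.  Since \(X_{\mathcal A}\) is finite discrete, closures are trivial, so \(\M^{\Filt}(b)=\bigcap_{F\in\Filt}b[F]\).  Then \(C\) lies in this intersection exactly when \(b^{-1}(C)\) meets every \(F\in\Filt\), which is the definition of \(\Filt\)-positivity.  I would mention this as a cross-check.

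For sequential time, \(\Filt\) is the cofinite filter on \(\N\).  A set is positive for this filter exactly when it meets every cofinite set, that is, exactly when it is infinite.  So the meaning is the set of abstract states \(C\) with \(b_n=C\) for infinitely many \(n\).

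The only step needing any care is the identification \(\beta X_{\mathcal A}=X_{\mathcal A}\), which is what lets points of the meaning be read as singletons.  It follows from the universal property: the identity map on a compact Hausdorff space already has the extension property.  It can also be seen directly, since every ultrafilter on a finite set is principal.  Everything else is an immediate application of earlier results.
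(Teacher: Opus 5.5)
Your proposal is correct and follows the paper's proof: functoriality applied to the continuous quotient map \(q_{\mathcal A}\) gives the first identity, and \(\beta X_{\mathcal A}=X_{\mathcal A}\) together with the discrete temporal reading applied to singletons \(\{C\}\) gives the second. Your check that the fibres of \(q_{\mathcal A}\) are clopen atoms, and your observation that positivity for the cofinite filter means being infinite, fill in details the paper leaves implicit.
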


\begin{proof}
The first statement is \Cref{thm:functoriality}, applied to the continuous map \(q_{\mathcal A}:X\to X_{\mathcal A}\).  Since \(X_{\mathcal A}\) is finite discrete, \(\beta X_{\mathcal A}=X_{\mathcal A}\).  A singleton \(\{C\}\subseteq X_{\mathcal A}\) belongs to the meaning exactly when its preimage under \(q_{\mathcal A}\circ a\) is \(\Filt\)-positive, by \Cref{thm:discrete-temporal}.  This proves the second statement.
\end{proof}

Thus a finite observational abstraction computes the recurrent abstract states of a run.  The abstraction loses distinctions inside each atom of the Boolean algebra, but it preserves exactly the tail information expressible by the chosen observations.  In particular, any verification or analysis carried out in the finite quotient can be lifted back to a statement about the original Stone--\v{C}ech meaning by applying \(\beta q_{\mathcal A}\).  This is the sense in which finite-state calculations below are computations of observable images of the compact semantics.

\subsection{Graph computation for finite transition systems}

The preceding proposition becomes algorithmic when the finite quotient is presented by a finite abstract transition graph.  Let
\[
  G=(A,E)
\]
be a finite directed graph, where \(A\) is a finite set of abstract states and \(E\subseteq A\times A\) is the abstract transition relation.  An infinite abstract run is an infinite path
\[
  \pi=a_0a_1a_2\cdots
\]
with \((a_n,a_{n+1})\in E\) for every \(n\).  Write
\[
  \Inf(\pi)=\{a\in A:a_n=a\text{ for infinitely many }n\}.
\]

\begin{proposition}[Finite graph semantics]
\label{prop:finite-graph-semantics}
Let \(G=(A,E)\) be a finite directed graph and let \(\pi:\N\to A\) be an infinite path.  Then
\[
  \M_A(\pi)=\Inf(\pi).
\]
The set \(\Inf(\pi)\) is nonempty, and after deleting a finite prefix the path remains inside \(\Inf(\pi)\).  Moreover, \(\Inf(\pi)\) is strongly connected as a subgraph of \(G\).  Consequently every abstract Stone--\v{C}ech meaning of an infinite path is a nonempty strongly connected set of abstract states contained in a reachable strongly connected component of \(G\).
\end{proposition}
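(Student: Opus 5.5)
The plan is to obtain the identity \(\M_A(\pi)=\Inf(\pi)\) directly from \Cref{prop:finite-observational-abstraction}, or equivalently from \Cref{thm:discrete-temporal}, and then to prove the graph-theoretic claims by a pigeonhole argument on the finite set \(A\).

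First, \(A\) is finite and discrete, so \(\beta A=A\) and every point is isolated. By \Cref{thm:discrete-temporal}(ii), applied with the cofinite filter and \(A\)-singletons, a state \(x\in A\) lies in \(\M_A(\pi)\) exactly when \(\pi^{-1}(\{x\})\) is infinite. That is precisely the condition \(x\in\Inf(\pi)\). Nonemptiness then follows either from \Cref{thm:tail-closure} or from pigeonhole, since \(\N=\bigcup_{x\in A}\pi^{-1}(x)\) is a finite union, so some fibre is infinite.

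Next I show that the path eventually stays in \(\Inf(\pi)\). Each state \(x\notin\Inf(\pi)\) is visited only finitely often. Since \(A\) is finite, there is an \(N\) beyond all such visits, and then \(a_n\in\Inf(\pi)\) for all \(n\geq N\). Equivalently, \(\pi^{-1}(\Inf(\pi))\) is cofinite, which is \Cref{thm:discrete-temporal}(i) with \(A:=\Inf(\pi)\).

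The main step, though still elementary, is strong connectivity of \(\Inf(\pi)\) as an induced subgraph. Take \(x,y\in\Inf(\pi)\). I pick \(m\geq N\) with \(a_m=x\), and then \(n>m\) with \(a_n=y\), which exists because \(y\) recurs. The segment \(a_m a_{m+1}\cdots a_n\) is a walk in \(G\), since consecutive terms are edges of \(E\). All of its vertices lie in \(\Inf(\pi)\) because \(m\geq N\). So it is a path from \(x\) to \(y\) inside the subgraph, and this also covers \(x=y\) by taking a return visit. Finally, strongly connected sets are contained in a single strongly connected component of \(G\). That component is reachable from \(a_0\) via the prefix \(a_0\cdots a_N\), which gives the concluding sentence. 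The only care needed is to choose \(m\geq N\) so that the connecting walk does not leave \(\Inf(\pi)\).
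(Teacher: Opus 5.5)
Your proposal is correct and follows essentially the same route as the paper. You obtain \(\M_A(\pi)=\Inf(\pi)\) from the finite discrete case of the observational-abstraction result, use pigeonhole for nonemptiness and for the path eventually staying in \(\Inf(\pi)\), and prove strong connectivity by taking an occurrence of \(x\) after \(N\) followed by a later occurrence of \(y\). You additionally spell out the final containment in a strongly connected component reachable from \(a_0\) via the prefix \(a_0\cdots a_N\), which the paper leaves implicit.
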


\begin{proof}
The equality \(\M_A(\pi)=\Inf(\pi)\) is \Cref{prop:finite-observational-abstraction} in the special case where the observation space is already finite and discrete.  Since \(A\) is finite and \(\pi\) is infinite, at least one state occurs infinitely often, so \(\Inf(\pi)\neq\varnothing\).  Every state outside \(\Inf(\pi)\) occurs only finitely often.  Since there are only finitely many such states, there is \(N\) such that \(a_n\in\Inf(\pi)\) for all \(n\geq N\).

Let \(u,v\in\Inf(\pi)\).  Choose an occurrence of \(u\) after \(N\), and then a later occurrence of \(v\).  The segment of the path between these two occurrences lies entirely in \(\Inf(\pi)\), and gives a path from \(u\) to \(v\) inside the induced subgraph on \(\Inf(\pi)\).  Reversing the roles of \(u\) and \(v\) gives a path from \(v\) to \(u\).  Hence \(\Inf(\pi)\) is strongly connected.
\end{proof}

Conversely, a reachable strongly connected part of a finite graph can be realised as the meaning of an infinite path whenever it supports an infinite path visiting all of its vertices.  In particular, if \(C\) is a reachable strongly connected component with more than one vertex, or a one-vertex component with a self-loop, then there is an infinite path whose abstract meaning is exactly \(C\).  One reaches \(C\), and then repeats a closed walk through all vertices of \(C\).  If deadlocked vertices are to count as stable terminal behaviour, one may add stuttering self-loops before applying this statement.

This gives a concrete finite procedure.  Once the finite graph has been built, one computes the reachable strongly connected regions that can support infinite paths.  For a particular path, the meaning is the set of vertices occurring infinitely often.  For a family of paths, the possible finite images of the compact meanings are the recurrent strongly connected sets selected by those paths.  The graph calculation therefore computes the finite observational shadows of the Stone--\v{C}ech meanings.

\subsection{Fair finite graph computation}

The graph computation also gives a finite version of the filter treatment of fairness.  The following result uses a particular edge-fair reading, fixed here to avoid ambiguity.  An infinite path \(\pi=a_0a_1a_2\cdots\) in a finite graph \(G=(A,E)\) is called edge-fair if, whenever a state \(u\) occurs infinitely often along \(\pi\), every edge \((u,v)\in E\) is traversed infinitely often along \(\pi\).

\begin{proposition}[Edge-fair finite graph meanings]
\label{prop:edge-fair-finite-graph}
Let \(G=(A,E)\) be a finite directed graph and let \(\pi\) be an edge-fair infinite path.  Then \(\M_A(\pi)=\Inf(\pi)\), and \(\Inf(\pi)\) is a reachable terminal strongly connected component of \(G\): it is strongly connected, reachable from the initial state of \(\pi\), and has no outgoing edge to \(A\setminus\Inf(\pi)\).

Conversely, if \(C\) is a reachable terminal strongly connected component which supports an infinite path, then there is an edge-fair infinite path which eventually remains in \(C\) and whose meaning is exactly \(C\).  In particular this converse applies to every reachable terminal component after adding stuttering self-loops at deadlocked terminal states.
\end{proposition}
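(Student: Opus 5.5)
The plan is to build on \Cref{prop:finite-graph-semantics}, which already gives \(\M_A(\pi)=\Inf(\pi)\). It also shows that \(\Inf(\pi)\) is nonempty and strongly connected, and that the path lies in \(\Inf(\pi)\) from some index \(N\) onwards. Reachability from the initial state \(a_0\) is immediate, because every vertex of \(\Inf(\pi)\) occurs on the path. Two facts then remain for the forward direction: \(\Inf(\pi)\) has no outgoing edge, and it is a full strongly connected component rather than a proper strongly connected subset of one.

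For closure, take \(u\in\Inf(\pi)\) and \((u,v)\in E\). Edge-fairness says the edge \((u,v)\) is traversed infinitely often, so \(v\) occurs infinitely often and \(v\in\Inf(\pi)\). Maximality follows from closure. Suppose \(w\) lies in the same strongly connected component as some \(u\in\Inf(\pi)\). Then there is a path \(u=w_0\to w_1\to\cdots\to w_k=w\) in \(G\). Applying closure step by step gives \(w\in\Inf(\pi)\). Hence \(\Inf(\pi)\) is exactly a strongly connected component, and it has no edges leaving it, so it is terminal. No step here is delicate; the whole argument is a forward-closure induction.

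For the converse, let \(C\) be a reachable terminal component that supports an infinite path. Reachability presumably means reachable from a designated initial vertex, and I would make that explicit. First choose a finite path from the initial vertex into \(C\). Because \(C\) is strongly connected and supports an infinite path, it has at least one edge inside it, namely a self-loop or an edge between two distinct vertices. Because \(C\) is terminal, every edge \((u,v)\) with \(u\in C\) has \(v\in C\). Strong connectivity then gives a closed walk \(W\) inside \(C\) that traverses every edge of \(E\cap(C\times C)\). To build it, concatenate, for each such edge \((u,v)\), a path to \(u\), the edge itself, and a return path, which keeps every step inside \(C\). The infinite path is the entry prefix followed by \(W\) repeated forever. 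Its infinitely often visited set is \(C\). Every edge out of a vertex of \(C\) lies inside \(C\) and so is traversed infinitely often. The finitely many prefix vertices outside \(C\) occur only finitely often and impose no edge-fairness condition. So the path is edge-fair, and \(\M_A(\pi)=\Inf(\pi)=C\).

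For the final clause, adding a stuttering self-loop at a deadlocked terminal vertex \(d\) makes \(\{d\}\) a terminal component with a self-loop. That component supports the infinite path \(ddd\cdots\), so the converse applies. The only point needing care is that the constructed walk covers every outgoing edge of \(C\), and this is where terminality is essential. Everything else is routine.
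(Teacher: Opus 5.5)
Your proposal is correct and follows essentially the same route as the paper: you build on \Cref{prop:finite-graph-semantics}, use edge-fairness to show that \(\Inf(\pi)\) is closed under outgoing edges, and prove the converse with an entry prefix followed by a repeated closed walk through every edge of \(C\), where terminality guarantees that this walk covers every outgoing edge. Your explicit forward-closure argument, showing that \(\Inf(\pi)\) is a whole strongly connected component and not just a strongly connected subset of one, supplies a step the paper leaves implicit.
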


\begin{proof}
The equality \(\M_A(\pi)=\Inf(\pi)\) and strong connectedness are given by \Cref{prop:finite-graph-semantics}.  Suppose \(u\in\Inf(\pi)\) and \((u,v)\in E\).  Since \(\pi\) is edge-fair, the edge \((u,v)\) is traversed infinitely often.  Hence \(v\in\Inf(\pi)\).  Thus there is no edge from \(\Inf(\pi)\) to its complement.  Since \(\Inf(\pi)\) is strongly connected and reachable along \(\pi\), it is a reachable terminal strongly connected component.

For the converse, reach \(C\) along a finite path.  Since \(C\) is strongly connected and supports an infinite path, one can choose an infinite walk inside \(C\) which traverses every edge of the finite subgraph induced by \(C\) infinitely often; for example, enumerate the finitely many edges of \(C\), and between successive edges use strong connectedness to connect the current endpoint to the next required source.  Repeating this enumeration gives an edge-fair path.  Since \(C\) is terminal, all outgoing edges from states in \(C\) remain in \(C\).  Every vertex of \(C\) occurs infinitely often, so the meaning is \(C\).
\end{proof}

Under this edge-fairness convention, SCC computation has a particularly simple reading.  The fair finite meanings are the reachable terminal components selected by fair infinite paths, subject only to the convention about stuttering at deadlocked terminal states.  Thus the same finite quotient can be analysed either under an unrestricted scheduler, where recurrent strongly connected sets may occur, or under the edge-fair convention above, where terminal strongly connected components are forced.  In both cases the graph-theoretic output is an observable image of the same underlying compact tail semantics.

\subsection{A small CCS-shaped abstraction}

For a process space \(\mathcal P\), choose finitely many process observations.  In the discrete process topology these predicates are automatically clopen.  In a coarser observation topology they are clopen precisely when they belong to, or are generated by, the chosen Boolean algebra of observations.  For example:
\[
\begin{array}{ll}
A_1 &= \{P:P\text{ has an }a\text{-transition}\},\\[2mm]
A_2 &= \{P:P\text{ has a }b\text{-transition}\},\\[2mm]
A_3 &= \{P:P\text{ has a }\tau\text{-transition}\},\\[2mm]
A_4 &= \{P:P\text{ has no outgoing transition}\}.
\end{array}
\]
These generate a finite Boolean algebra of observations and hence a finite quotient
\[
  q_{\mathcal A}:\mathcal P\to\mathcal P_{\mathcal A}.
\]
An infinite residual execution
\[
  P_0\to P_1\to P_2\to\cdots
\]
then gives an abstract residual execution in the finite set \(\mathcal P_{\mathcal A}\).  By \Cref{prop:finite-observational-abstraction}, the abstract tail meaning is the image of the concrete residual meaning under the induced map \(\beta q_{\mathcal A}\).

The abstract meaning is therefore the set of observation-types occurring infinitely often.  If the abstract meaning is contained in the class where \(A_1\) holds, then after some finite point every residual process has an \(a\)-transition.  If the abstract meaning intersects the class where \(A_4\) holds, then deadlocked residuals occur infinitely often.  If the abstract meaning contains several atoms, the run has recurrently different observable process types, even if the concrete residual processes themselves do not recur.

If a finite abstract transition graph has been computed for these observation types, \Cref{prop:finite-graph-semantics} and \Cref{prop:edge-fair-finite-graph} reduce the possible abstract tail meanings to graph analysis.  Without a fairness assumption one computes the recurrent strongly connected sets that can be visited by infinite abstract paths.  With edge fairness in the sense just defined, the possible fair meanings are the reachable terminal strongly connected components, subject to the convention about stuttering at deadlocked terminal states.  In either case the finite result is a computable projection of the compact residual meaning rather than a separate semantics.

\section{Residual behaviour in CCS}
\label{sec:ccs}

We now turn to the main process-calculus application of the construction.  The observation made along an execution is not just the next action, but the residual process left after each transition.  Thus an infinite execution
\[
  P_0\to P_1\to P_2\to\cdots
\]
is read as a stream of structural-congruence classes of CCS processes, and its Stone--\v{C}ech meaning records the residual processes, or residual process types, visible along the infinite tail.

The section has three parts.  First, it fixes the process universe, the labelled transition relation, and the process topologies used for residual observation.  The primary topology is the discrete topology on structural-congruence classes.  Coarser observation-induced topologies are obtained from Boolean algebras of modal process observations, passing to an observational quotient when separation is needed.  Secondly, the general tail theory gives concrete process laws: finite prefixing, guarded unfolding, finite choice and finite prefix-choice forms have simple asymptotic residual meanings.  Thirdly, the section records the boundary of these laws.  They are laws of residual tails, not congruence laws for arbitrary process contexts; in particular, parallel composition and synchronisation can make a finite prefix relevant before it disappears from the tail.

The CCS presentation itself is standard.  Processes are closed guarded CCS terms, considered modulo structural congruence, with the usual strong labelled transition relation.  Standard references for CCS and its labelled transition semantics include Milner's book \cite{Milner1989}.

\subsection{The process universe}

Let \(\Act\) be a set of visible action names, with co-names written \(\overline a\) for \(a\in\Act\), and let \(\tau\) be the silent action.  Labels are drawn from
\[
  \Lambda=\Act\cup\overline{\Act}\cup\{\tau\}.
\]
We use the following CCS grammar:
\[
  P,Q ::= 0
  \mid \alpha.P
  \mid P+Q
  \mid P\mid Q
  \mid P\setminus L
  \mid P[f]
  \mid A,
\]
where \(\alpha\in\Lambda\), \(L\subseteq\Act\) is finite, \(f\) is a relabelling function respecting co-names and \(\tau\), and \(A\) ranges over process constants.  Each constant \(A\) is equipped with a guarded defining equation
\[
  A \stackrel{\mathrm{def}}{=} P_A.
\]
Only closed terms are considered.

Let \(\equiv\) be the usual structural congruence generated by the monoid laws for parallel composition,
\[
  P\mid 0\equiv P,
  \qquad
  P\mid Q\equiv Q\mid P,
  \qquad
  (P\mid Q)\mid R\equiv P\mid(Q\mid R),
\]
together with the corresponding congruence rules for the syntax under consideration.  Write \([P]\) for the structural-congruence class of \(P\), and let \(\CCS\) denote the set of closed guarded CCS processes modulo \(\equiv\).  Defining equations for guarded constants are not included in structural congruence; they are used only through the standard transition rule for constants.

The labelled transition relation is the standard strong CCS transition relation, written
\[
  [P]\xrightarrow{\mu}[Q],
  \qquad \mu\in\Lambda.
\]
The rules include prefix, choice, parallel interleaving, synchronisation giving a \(\tau\)-transition, restriction, relabelling, and unfolding of guarded constants.  Since the relation is invariant under structural congruence, it is well defined on equivalence classes.

\begin{remark}
The examples below use only prefix, guarded recursion, choice and parallel composition.  Restriction, relabelling and synchronisation are included so that the process universe is the usual CCS one rather than a special fragment.
\end{remark}

\subsection{Residual runs and residual meanings}

\begin{definition}[Residual run]
Let \([P]\in\CCS\).  An infinite residual run from \([P]\) is a sequence
\[
  \rho=([P_0],[P_1],[P_2],\ldots)
\]
such that \([P_0]=[P]\) and, for every \(n\), there is a label \(\mu_n\in\Lambda\) with
\[
  [P_n]\xrightarrow{\mu_n}[P_{n+1}].
\]
We write \(\Runs_\omega([P])\) for the set of infinite residual runs from \([P]\).
\end{definition}

The run \(\rho\) is viewed as a stream
\[
  \rho:\N\to\CCS,
  \qquad
  n\mapsto [P_n].
\]
With the discrete topology on \(\CCS\), its Stone--\v{C}ech residual meaning is
\[
  \M_{\CCS}(\rho)
  =
  \beta\rho[\N^*]
  =
  \bigcap_{N<\omega}\cl_{\beta\CCS}\{[P_n]:n\geq N\}.
\]

\begin{definition}[Process-level residual semantics]
For a process \([P]\), define its run-sensitive residual semantics by
\[
  \RunMean_{\CCS}([P])
  =
  \{\M_{\CCS}(\rho):\rho\in\Runs_\omega([P])\}
  \subseteq
  \K(\beta\CCS).
\]
The corresponding flattened residual set is
\[
  \Res_\beta([P])
  =
  \bigcup_{\rho\in\Runs_\omega([P])}\M_{\CCS}(\rho)
  \subseteq
  \beta\CCS.
\]
If \([P]\) has no infinite residual run, then \(\RunMean_{\CCS}([P])=\varnothing\).
\end{definition}

Finite terminating computations are therefore absent from \(\RunMean_{\CCS}([P])\).  If termination is to be represented as eventual stability, one can add stuttering self-loops at terminal processes before applying the construction.

\subsection{Asymptotic equivalence and prefix erasure}
\label{subsec:ccs-prefix-erasure}

The residual semantics is insensitive to finite initial behaviour.  The laws in this subsection should therefore be read as laws of infinite residual tails, not as equations in the ordinary labelled transition system.  Their proofs use the Stone--\v{C}ech construction only through tail invariance: once two residual streams agree after finitely many steps, they have the same asymptotic meaning.

\begin{definition}[Asymptotic residual equivalence]
For closed guarded CCS processes \([P]\) and \([Q]\), define
\[
  [P]\equiv_\infty [Q]
  \quad\Longleftrightarrow\quad
  \RunMean_{\CCS}([P])=\RunMean_{\CCS}([Q]).
\]
The flattened version is
\[
  [P]\equiv^\flat_\infty [Q]
  \quad\Longleftrightarrow\quad
  \Res_\beta([P])=\Res_\beta([Q]).
\]
\end{definition}

The first law says that an isolated initial action prefix has no residual-tail content.  The action may be visible or silent.  This does not make the prefix behaviourally invisible in CCS; it only says that, after the prefix has been consumed and the infinite tail is being measured, the initial one-step difference has disappeared.

\begin{proposition}[Prefix erasure]
\label{prop:ccs-prefix-erasure}
Let \(\alpha\in\Lambda\) and let \(P\) be a closed guarded CCS process.  Then
\[
  \RunMean_{\CCS}([\alpha.P])
  =
  \RunMean_{\CCS}([P]),
\]
and hence
\[
  \Res_\beta([\alpha.P])
  =
  \Res_\beta([P]).
\]
Equivalently,
\[
  [\alpha.P]\equiv_\infty [P]
  \qquad\text{and}\qquad
  [\alpha.P]\equiv^\flat_\infty [P].
\]
\end{proposition}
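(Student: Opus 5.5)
The plan is to set up an explicit bijection between infinite residual runs from \([\alpha.P]\) and those from \([P]\), and then use \Cref{cor:tail-invariance} to see that corresponding runs have the same meaning.

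The first step is to identify the transitions of \([\alpha.P]\). I claim its only transition is \([\alpha.P]\xrightarrow{\alpha}[P]\). By the prefix rule this transition exists. No other rule applies to a term whose top constructor is a prefix. The point that needs care is that we work modulo \(\equiv\). I will check that if \(\alpha.P\equiv R\), then \(R\) is itself of the form \(\alpha.P'\) with \(P'\equiv P\), possibly wrapped in trivial parallel units such as \(\alpha.P'\mid 0\). This holds because structural congruence is generated by the monoid laws for \(\mid\) and congruence closure. Such an \(R\) has exactly the same transitions up to \(\equiv\): the only moves of \(0\) are none, so interleaving adds nothing, and synchronisation with \(0\) is impossible. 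Since the paper states that the transition relation is invariant under \(\equiv\), this can be cited directly. The conclusion is that every infinite residual run from \([\alpha.P]\) has the form \(([\alpha.P],[P_1],[P_2],\ldots)\) with \([P_1]=[P]\).

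The second step is to transfer runs in both directions. Given \(\rho\in\Runs_\omega([\alpha.P])\), define its shift \(\sigma\rho=([P_1],[P_2],\ldots)\). By the first step, \(\sigma\rho\in\Runs_\omega([P])\). Conversely, given \(\rho'=([P],[Q_1],\ldots)\in\Runs_\omega([P])\), prepend \([\alpha.P]\) to obtain a run from \([\alpha.P]\), using the transition \([\alpha.P]\xrightarrow{\alpha}[P]\). These two maps are mutually inverse.

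The third step, comparing meanings, is where a small subtlety arises. A run and its shift are not equal on a cofinite set of indices, so \Cref{cor:tail-invariance} cannot be applied verbatim. I will argue directly from \Cref{thm:tail-closure}. For each \(N\ge 1\),
\[
  \{\rho(n):n\ge N+1\}=\{\sigma\rho(n):n\ge N\},
\]
so the two families of tail closures agree after an index shift. Because the tail sets decrease in \(N\), the intersections \(\bigcap_N\cl_{\beta\CCS}\{\cdot\}\) coincide, and hence \(\M_{\CCS}(\rho)=\M_{\CCS}(\sigma\rho)\). Equivalently, one could note that the shift \(n\mapsto n+1\) induces a homeomorphism of \(\N^*\) compatible with \(\beta\rho\), but the tail-closure argument is more elementary.

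With these steps in place, the sets \(\RunMean_{\CCS}([\alpha.P])\) and \(\RunMean_{\CCS}([P])\) consist of the same compact sets, so they are equal. Taking unions gives \(\Res_\beta([\alpha.P])=\Res_\beta([P])\), and the two stated equivalences \(\equiv_\infty\) and \(\equiv^\flat_\infty\) follow directly from their definitions. I expect the only real obstacle to be the first step, namely being careful that working modulo structural congruence introduces no extra initial transitions.
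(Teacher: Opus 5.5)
Your proposal is correct and follows the same route as the paper. The unique prefix transition \([\alpha.P]\xrightarrow{\alpha}[P]\) gives a correspondence between runs from \([\alpha.P]\) and runs from \([P]\) by deleting or prepending the first term, and a tail argument then shows that corresponding runs have the same meaning. Your remark in the third step is also right, and it tightens the paper's proof. The paper cites \Cref{cor:tail-invariance} for streams that only ``agree after deleting finitely many initial terms'', but a run and its shift need not agree pointwise on any cofinite set; an alternating tail is an example. The shift invariance that is actually needed is better obtained, as you do, directly from \Cref{thm:tail-closure}.
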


\begin{proof}
The process \(\alpha.P\) has the unique initial prefix transition
\[
  [\alpha.P]\xrightarrow{\alpha}[P].
\]
Thus every infinite residual run from \([\alpha.P]\) has the form
\[
  [\alpha.P],\ [P],\ [P_1],\ [P_2],\ldots,
\]
where
\[
  [P],\ [P_1],\ [P_2],\ldots
\]
is an infinite residual run from \([P]\).  Conversely, every infinite residual run from \([P]\) gives an infinite residual run from \([\alpha.P]\) by adding this initial prefix step.

The two corresponding streams agree after deleting finitely many initial terms.  By \Cref{cor:tail-invariance}, they have the same Stone--\v{C}ech residual meaning.  This gives equality of the run-sensitive sets of meanings.  Equality of the flattened residual sets follows by taking unions of the same compact meanings.
\end{proof}

Iterating the preceding proposition removes any finite string of initial prefixes from the residual-tail semantics.  If \(w=\alpha_1\cdots\alpha_k\in\Lambda^*\), write
\[
  w.P=\alpha_1.\alpha_2.\cdots\alpha_k.P,
\]
with the convention that the empty word gives \(P\).

\begin{corollary}[Finite prefix erasure]
\label{cor:ccs-finite-prefix-erasure}
For every finite word \(w\in\Lambda^*\) and every closed guarded process \(P\),
\[
  \RunMean_{\CCS}([w.P])
  =
  \RunMean_{\CCS}([P])
\]
and
\[
  \Res_\beta([w.P])
  =
  \Res_\beta([P]).
\]
\end{corollary}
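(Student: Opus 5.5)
The plan is induction on the length \(k\) of the word \(w\), with \Cref{prop:ccs-prefix-erasure} as the inductive step.

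\emph{Base case.} For \(k=0\) the convention \(w.P=P\) makes both equalities trivial.

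\emph{Inductive step.} Write \(w=\alpha_1 w'\) with \(w'\in\Lambda^*\) of length \(k-1\). Then \(w.P\) is, syntactically, \(\alpha_1.(w'.P)\). The term \(w'.P\) is again a closed guarded CCS process: prefixing preserves closedness, and it adds no unguarded constant occurrences. So \Cref{prop:ccs-prefix-erasure} applies with \(\alpha=\alpha_1\) and the process \(w'.P\) in place of \(P\), giving
\[
  \RunMean_{\CCS}([w.P])=\RunMean_{\CCS}([w'.P]).
\]
The induction hypothesis gives \(\RunMean_{\CCS}([w'.P])=\RunMean_{\CCS}([P])\), and chaining the two equalities proves the run-sensitive statement.

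\emph{Flattened statement.} This follows either by running the same induction with the flattened half of \Cref{prop:ccs-prefix-erasure}, or directly by taking unions. Indeed, \(\Res_\beta([Q])=\bigcup\RunMean_{\CCS}([Q])\) for every \(Q\), so equal run-sensitive semantics have equal unions.

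\emph{Alternative argument and main point to check.} One could instead argue directly. Every infinite residual run from \([w.P]\) is forced through the \(k\) prefix transitions to \([P]\), since each \(\alpha_i.(\cdots)\) has a unique transition. After deleting \(k\) initial terms the run is an infinite residual run from \([P]\), and \Cref{cor:tail-invariance} then equates the meanings, with the cofinite filter absorbing the finite shift. I would present the induction, as it is shorter. The only point needing attention is confirming that the intermediate terms \(w'.P\) stay within the process universe, so that \Cref{prop:ccs-prefix-erasure} is applicable at each step. There is no real obstacle beyond that.
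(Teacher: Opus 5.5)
Your proposal is correct and matches the paper's proof, which is exactly an induction on the length of \(w\) applying \Cref{prop:ccs-prefix-erasure} at each step. Your check that \(w'.P\) stays a closed guarded process, and your observation that \(\Res_\beta([Q])=\bigcup\RunMean_{\CCS}([Q])\) gives the flattened equality directly, are both valid.
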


\begin{proof}
Induct on the length of \(w\), using \Cref{prop:ccs-prefix-erasure} at each step.
\end{proof}

The same tail argument explains why guarded constants may be unfolded without changing the asymptotic residual meaning.  This is not an additional structural congruence assumption.  It is a consequence of the standard transition rule for constants: after the first unfolding step, the possible successor residuals are the same.

\begin{proposition}[Guarded unfolding]
\label{prop:ccs-guarded-unfolding}
Suppose \(A\) is a process constant with guarded defining equation
\[
  A\stackrel{\mathrm{def}}{=}P_A.
\]
Then
\[
  \RunMean_{\CCS}([A])
  =
  \RunMean_{\CCS}([P_A]),
\]
and hence
\[
  \Res_\beta([A])
  =
  \Res_\beta([P_A]).
\]
\end{proposition}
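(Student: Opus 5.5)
The plan is to follow the pattern of \Cref{prop:ccs-prefix-erasure}: set up a bijection between infinite residual runs from \([A]\) and from \([P_A]\) under which corresponding streams agree after finitely many initial terms. Then conclude by \Cref{cor:tail-invariance}.

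The key input is the standard transition rule for constants: \([A]\xrightarrow{\mu}[Q]\) holds exactly when \([P_A]\xrightarrow{\mu}[Q]\). So \([A]\) and \([P_A]\) have the same set of one-step successors, with the same labels. Since guarded defining equations are not part of structural congruence, \([A]\) and \([P_A]\) may be distinct points of \(\CCS\). That is why the argument must go through tail invariance rather than equality of runs.

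The steps are as follows.

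First, take an infinite residual run \([A],[P_1],[P_2],\ldots\) from \([A]\). Its first step \([A]\xrightarrow{\mu_0}[P_1]\) yields \([P_A]\xrightarrow{\mu_0}[P_1]\), so \([P_A],[P_1],[P_2],\ldots\) is an infinite residual run from \([P_A]\).

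Second, the converse replacement, again using the constant rule, turns every run from \([P_A]\) into one from \([A]\).

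Third, the two streams in each corresponding pair differ only at index \(0\). The set of indices where they agree is cofinite, so it lies in the cofinite filter, and \Cref{cor:tail-invariance} gives equal meanings \(\M_{\CCS}\).

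Fourth, the replacement maps are mutually inverse and therefore a bijection, which yields \(\RunMean_{\CCS}([A])=\RunMean_{\CCS}([P_A])\). Taking unions over the same compact meanings gives equality of \(\Res_\beta\).

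The only point needing care is the first step: checking that the transition rule for constants is exactly an ``if and only if'' on successors at the level of structural-congruence classes. The relation is invariant under \(\equiv\) and the rule is \(P_A\xrightarrow{\mu}Q\) implies \(A\xrightarrow{\mu}Q\), with no other rule deriving transitions of \(A\). I would state this inversion property explicitly before running the argument above.

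Guardedness is not actually needed for this tail argument. It only ensures that the transition relation is well behaved, so I would note that it is used only implicitly, through the standing assumptions on the process universe.
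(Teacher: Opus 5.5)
Your proposal is correct and follows essentially the same route as the paper. The paper also uses the constant rule to show that \([A]\) and \([P_A]\) have the same labelled successors, swaps the initial term to pass between runs in both directions, and then concludes by tail invariance, taking unions for \(\Res_\beta\). Your additional remarks are accurate refinements rather than a different approach: making the inversion property explicit at the level of structural-congruence classes, and noting that guardedness plays no role in the tail argument itself.
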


\begin{proof}
By the standard unfolding rule for constants, \([A]\xrightarrow{\mu}[R]\) exactly when \([P_A]\xrightarrow{\mu}[R]\).  Hence an infinite residual run from \([A]\) is obtained from an infinite residual run from \([P_A]\) by replacing the initial term \([P_A]\) with \([A]\), and conversely.  The resulting streams agree from the first successor state onward.  Tail invariance, \Cref{cor:tail-invariance}, therefore gives the same Stone--\v{C}ech residual meaning for corresponding runs.  Taking all runs gives equality of \(\RunMean_{\CCS}\), and taking unions gives equality of \(\Res_\beta\).
\end{proof}

Taken together, prefix erasure and guarded unfolding express a narrow but useful invariance principle.  They are not trace laws, nor are they bisimulation laws.  They say that the residual meaning attached to an infinite execution depends only on the eventual residual stream, so a finite initial segment contributes no point to the compact tail set.

\subsection{Choice as union and prefix-choice normalisation}
\label{subsec:ccs-choice-normalisation}

The preceding tail laws remove finite prefixes once a run has selected a tail.  Finite external choice has a similarly direct residual-tail reading.  An infinite run from \(P+Q\) makes its first move by using a transition of one summand, and after that point it is simply following a run of that summand.  Thus choice is interpreted by union of possible tail meanings.  Again, this is not a trace equation: it is a statement about the compact residual sets obtained after commitment to an infinite tail.

\begin{proposition}[Choice as union]
\label{prop:ccs-choice-union}
Let \(P\) and \(Q\) be closed guarded CCS processes.  Then
\[
  \RunMean_{\CCS}([P+Q])
  =
  \RunMean_{\CCS}([P])\cup\RunMean_{\CCS}([Q]),
\]
and hence
\[
  \Res_\beta([P+Q])
  =
  \Res_\beta([P])\cup\Res_\beta([Q]).
\]
\end{proposition}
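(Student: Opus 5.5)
The plan is to follow the template of \Cref{prop:ccs-prefix-erasure} and \Cref{prop:ccs-guarded-unfolding}: establish a correspondence between infinite residual runs of \([P+Q]\) and those of \([P]\) or \([Q]\), in which corresponding streams agree from index \(1\) onward. \Cref{cor:tail-invariance} then identifies their meanings.

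For the forward inclusion, take \(\rho=([P+Q],[R_1],[R_2],\ldots)\in\Runs_\omega([P+Q])\). By the standard choice rule, the first transition \([P+Q]\xrightarrow{\mu_0}[R_1]\) arises from \(P\xrightarrow{\mu_0}R_1\) or from \(Q\xrightarrow{\mu_0}R_1\). Say it is the former. Then \(\rho'=([P],[R_1],[R_2],\ldots)\) is an infinite residual run from \([P]\). The streams \(\rho\) and \(\rho'\) differ only at index \(0\), so \(\M_{\CCS}(\rho)=\M_{\CCS}(\rho')\in\RunMean_{\CCS}([P])\). The case of \(Q\) is symmetric.

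For the reverse inclusion, take a run \(([P],[R_1],\ldots)\) from \([P]\). Its first step \([P]\xrightarrow{\mu_0}[R_1]\) yields \([P+Q]\xrightarrow{\mu_0}[R_1]\) by the choice rule. Replacing the head \([P]\) by \([P+Q]\) therefore gives a run from \([P+Q]\) with the same meaning, again by tail invariance. The same argument applies to \(Q\). The flattened equality for \(\Res_\beta\) follows by taking unions over the two sides.

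The main point needing care is well-definedness at the level of structural-congruence classes. The choice rule is stated for terms, while runs live on \(\CCS\). So I must check that a transition \([P+Q]\xrightarrow{\mu}[R]\) is witnessed by some term-level transition of a representative whose summands lie in \([P]\) and \([Q]\). This holds because \(\equiv\) is generated by parallel-composition monoid laws plus congruence rules. Hence any representative of \([P+Q]\) has the form \(P'+Q'\) with \(P'\equiv P\) and \(Q'\equiv Q\), since no law rewrites a top-level \(+\). Combined with the stated invariance of the transition relation under \(\equiv\), this lets the argument be carried out on representatives. The remaining concern is whether any infinite run exists at all, for instance when \(P\) has infinite runs but \(Q\) is stuck. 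The correspondence above handles this automatically, since each side's runs are constructed only from existing runs of the other, so empty cases cause no difficulty.
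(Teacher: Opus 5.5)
Your proposal is correct and follows the paper's proof essentially step for step. The first move of a run from \([P+Q]\) is inherited from one summand, and swapping the head term gives a run that agrees from index \(1\) onward, so \Cref{cor:tail-invariance} identifies the meanings; the converse uses the choice rule on the first step, and the flattened equality follows by taking unions.

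One small correction to your side remark: it is false that every representative of \([P+Q]\) is a sum, since the unit law gives \((P+Q)\mid 0\equiv P+Q\). You do not need that claim. Invariance of the transition relation under \(\equiv\) already lets you read any class-level step \([P+Q]\xrightarrow{\mu}[R_1]\) off the particular representative \(P+Q\), and only the choice rule applies to that term.
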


\begin{proof}
Let
\[
  \rho=([R_0],[R_1],[R_2],\ldots)
\]
be an infinite residual run from \([P+Q]\).  Thus \([R_0]=[P+Q]\).  By the standard CCS rules for choice, the first transition of \(\rho\) is inherited from one of the two summands.  Hence either there is a transition \([P]\xrightarrow{\mu}[R_1]\), or there is a transition \([Q]\xrightarrow{\mu}[R_1]\).  In the first case
\[
  ([P],[R_1],[R_2],\ldots)
\]
is an infinite residual run from \([P]\); in the second case the analogous sequence is an infinite residual run from \([Q]\).  In either case the new run and \(\rho\) agree from the first successor state onward, so their Stone--\v{C}ech residual meanings are equal by \Cref{cor:tail-invariance}.  This proves
\[
  \RunMean_{\CCS}([P+Q])
  \subseteq
  \RunMean_{\CCS}([P])\cup\RunMean_{\CCS}([Q]).
\]

Conversely, suppose \(K\in\RunMean_{\CCS}([P])\).  Choose an infinite residual run
\[
  \sigma=([P],[P_1],[P_2],\ldots)
\]
from \([P]\) with \(\M_{\CCS}(\sigma)=K\).  Its first transition is some \([P]\xrightarrow{\mu}[P_1]\).  By the choice rule, \([P+Q]\xrightarrow{\mu}[P_1]\), so
\[
  ([P+Q],[P_1],[P_2],\ldots)
\]
is an infinite residual run from \([P+Q]\).  It agrees with \(\sigma\) after finitely many initial terms, and so has the same residual meaning.  Thus \(K\in\RunMean_{\CCS}([P+Q])\).  The same argument applies to \(K\in\RunMean_{\CCS}([Q])\).  This gives the reverse inclusion.

The equality for flattened residual sets follows by taking unions of the compact meanings occurring in the run-sensitive semantics.
\end{proof}

By induction, finite sums are interpreted by finite unions of run-sensitive meanings.  If
\[
  P_1+\cdots+P_m
\]
is any fixed bracketing of a finite choice, then
\[
  \RunMean_{\CCS}([P_1+\cdots+P_m])
  =
  \bigcup_{i=1}^m\RunMean_{\CCS}([P_i]),
\]
and similarly for \(\Res_\beta\).  The statement is independent of the bracketing because the right hand side is.

Combining prefix erasure with choice gives a small normalisation theorem for asymptotic residual semantics.  The normal form is shallow: it removes only finite prefix-choice structure sitting above a finite family of leaves.  The leaves may themselves contain recursion, parallel composition, restriction, relabelling, or further CCS structure.  No claim is made that those inner processes can be simplified by the same argument.

\begin{definition}[Prefix-choice wrappers]
\label{def:prefix-choice-wrapper}
Fix a set \(\mathcal L_0\) of closed guarded CCS processes, called leaves.  A finite prefix-choice wrapper over \(\mathcal L_0\) is generated by
\[
  E ::= R \mid \alpha.E \mid E+E,
\]
where \(R\in\mathcal L_0\) and \(\alpha\in\Lambda\).  The finite set of leaves of \(E\), written \(\Leaves(E)\), is defined recursively by
\[
\begin{aligned}
  \Leaves(R) &= \{R\},\\
  \Leaves(\alpha.E) &= \Leaves(E),\\
  \Leaves(E_1+E_2) &= \Leaves(E_1)\cup\Leaves(E_2).
\end{aligned}
\]
\end{definition}

\begin{theorem}[Prefix-choice normalisation]
\label{thm:ccs-prefix-choice-normalisation}
Let \(E\) be a finite prefix-choice wrapper over a set of closed guarded CCS processes.  Then
\[
  \RunMean_{\CCS}([E])
  =
  \bigcup_{R\in\Leaves(E)}\RunMean_{\CCS}([R]),
\]
and
\[
  \Res_\beta([E])
  =
  \bigcup_{R\in\Leaves(E)}\Res_\beta([R]).
\]
\end{theorem}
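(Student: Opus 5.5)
The plan is structural induction on the wrapper $E$, following the three clauses of \Cref{def:prefix-choice-wrapper}. The claim about $\Res_\beta$ will be derived at the end from the claim about $\RunMean_{\CCS}$. To see why this works, note that $\Res_\beta([P])=\bigcup\RunMean_{\CCS}([P])$ for every process. Taking unions commutes with the finite union over $\Leaves(E)$, so
\[
  \Res_\beta([E])
  =
  \bigcup\bigcup_{R\in\Leaves(E)}\RunMean_{\CCS}([R])
  =
  \bigcup_{R\in\Leaves(E)}\Res_\beta([R]).
\]
The real work is therefore the run-sensitive equality.

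The three cases of the induction are as follows.
\begin{itemize}
\item \emph{Base case $E=R$ with $R\in\mathcal L_0$.} Here $\Leaves(E)=\{R\}$, and the equality is trivial.
\item \emph{Prefix case $E=\alpha.E'$.} \Cref{prop:ccs-prefix-erasure} gives $\RunMean_{\CCS}([\alpha.E'])=\RunMean_{\CCS}([E'])$. The induction hypothesis for $E'$, together with $\Leaves(\alpha.E')=\Leaves(E')$, closes the case.
\item \emph{Choice case $E=E_1+E_2$.} \Cref{prop:ccs-choice-union} gives $\RunMean_{\CCS}([E_1+E_2])=\RunMean_{\CCS}([E_1])\cup\RunMean_{\CCS}([E_2])$. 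Applying the induction hypothesis to each summand, the right side becomes a union over $\Leaves(E_1)\cup\Leaves(E_2)=\Leaves(E)$. Since a union indexed by a union of index sets equals the union of the two indexed unions, the case is done.
\end{itemize}

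Before starting the induction, I will check that every wrapper is itself a closed guarded CCS process, so that the earlier propositions apply to the subterms $E'$, $E_1$, $E_2$. Closedness is immediate because leaves are closed and the wrapper constructors bind nothing. Guardedness concerns only the defining equations of constants, and the wrapper adds no new constants, so the leaves' equations remain guarded.

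I do not expect a genuine obstacle. The step needing the most care is the bookkeeping that the hypotheses of \Cref{prop:ccs-prefix-erasure} and \Cref{prop:ccs-choice-union} hold for the subterms, which is the closedness and guardedness check above. A second point of care is that the equalities are read at the level of structural-congruence classes. This causes no ambiguity, because both propositions are stated for arbitrary representatives and $\Leaves$ is computed on the syntactic term $E$.
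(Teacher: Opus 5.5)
Your proposal is correct and matches the paper's proof: structural induction on the construction of \(E\), with \Cref{prop:ccs-prefix-erasure} handling the prefix case and \Cref{prop:ccs-choice-union} the choice case. The only difference is cosmetic: you derive the flattened equality once at the end from \(\Res_\beta([P])=\bigcup\RunMean_{\CCS}([P])\), whereas the paper carries the flattened equality through the same induction using the flattened parts of those two propositions.
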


\begin{proof}
Induct on the construction of \(E\).  If \(E=R\) is a leaf, the statement is immediate.  If \(E=\alpha.E'\), then \Cref{prop:ccs-prefix-erasure} gives
\[
  \RunMean_{\CCS}([\alpha.E'])=\RunMean_{\CCS}([E']),
\]
and the induction hypothesis applies to \(E'\).  The flattened equality is the same argument using the flattened part of \Cref{prop:ccs-prefix-erasure}.

If \(E=E_1+E_2\), then \Cref{prop:ccs-choice-union} gives
\[
  \RunMean_{\CCS}([E_1+E_2])
  =
  \RunMean_{\CCS}([E_1])\cup\RunMean_{\CCS}([E_2]).
\]
Applying the induction hypothesis to \(E_1\) and \(E_2\) gives the required union over
\(\Leaves(E_1)\cup\Leaves(E_2)=\Leaves(E)\).  The flattened equality follows in the same way.
\end{proof}

For example,
\[
  \RunMean_{\CCS}([a.(b.P+c.Q)+\tau.d.R])
  =
  \RunMean_{\CCS}([P])
  \cup
  \RunMean_{\CCS}([Q])
  \cup
  \RunMean_{\CCS}([R]).
\]
Thus finite prefixes and finite choices can be collapsed before the asymptotic residual meaning is computed, provided they occur in this outer prefix-choice form.  The displayed equality is a semantic normalisation for residual tails.  It does not say that the original process and the displayed leaves have the same labelled transition behaviour, nor that the normalisation may be pushed through arbitrary CCS contexts.

\subsection{The parallel-composition boundary}
\label{subsec:ccs-parallel-different}

The finite prefix-choice theorem is an outermost normalisation theorem.  It is tempting to read it as the beginning of a quotient semantics for all CCS terms, with \(\equiv_\infty\) used as a congruence and parallel composition interpreted directly on residual meanings.  The next example rules out that reading.  A prefix which disappears from the tail of a component may still be used before it disappears: in a parallel context it can synchronise with the environment and thereby change which infinite residual runs exist.

The boundary is already visible in a small closed context.

\begin{example}[Prefix erasure is not a full CCS congruence]
\label{ex:ccs-prefix-not-congruence}
Let \(a\) and \(b\) be distinct visible action names, and let
\[
  L \stackrel{\mathrm{def}}{=} b.L.
\]
By \Cref{prop:ccs-prefix-erasure},
\[
  [a.0]\equiv_\infty [0]
  \qquad\text{and}\qquad
  [a.0]\equiv^\flat_\infty [0],
\]
since neither \([a.0]\) nor \([0]\) has an infinite residual run.

Now place the two processes in the CCS context
\[
  C[-]=((- )\mid \overline a.L)\setminus\{a\}.
\]
In \(C[a.0]\), the two complementary prefixes can synchronise, giving a silent transition
\[
  [(a.0\mid \overline a.L)\setminus\{a\}]
  \xrightarrow{\tau}
  [(0\mid L)\setminus\{a\}].
\]
Since \(a\) and \(b\) are distinct, the residual process \([(0\mid L)\setminus\{a\}]\) has an infinite \(b\)-run.  Hence \(C[a.0]\) has an infinite residual run whose tail is the stable loop on \([(0\mid L)\setminus\{a\}]\).  In particular,
\[
  \RunMean_{\CCS}(C[a.0])\neq\varnothing.
\]

By contrast, in \(C[0]=[(0\mid\overline a.L)\setminus\{a\}]\), the visible \(\overline a\)-transition is blocked by the restriction and there is no complementary \(a\)-transition with which it can synchronise.  Thus \(C[0]\) has no infinite residual run, and
\[
  \RunMean_{\CCS}(C[0])=\varnothing.
\]
The same example also gives
\[
  C[a.0]\not\equiv^\flat_\infty C[0],
\]
since the flattened residual set of \(C[a.0]\) is nonempty while that of \(C[0]\) is empty.
\end{example}

\begin{proposition}[Prefix erasure is not contextual]
\label{prop:ccs-prefix-erasure-not-contextual}
The equivalences \(\equiv_\infty\) and \(\equiv^\flat_\infty\) are not congruences for arbitrary CCS contexts.
\end{proposition}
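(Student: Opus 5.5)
The plan is to exhibit a single pair of processes related by each equivalence, together with one closed CCS context in which they fail to be related. The counterexample is the one already described in \Cref{ex:ccs-prefix-not-congruence}. A congruence must be preserved by every context. So it suffices to find \([P]\equiv_\infty[Q]\), and likewise \([P]\equiv^\flat_\infty[Q]\), with \(C[P]\not\equiv_\infty C[Q]\) and \(C[P]\not\equiv^\flat_\infty C[Q]\). Take \(P=a.0\), \(Q=0\) and \(C[-]=((-)\mid\overline a.L)\setminus\{a\}\), where \(L\stackrel{\mathrm{def}}{=}b.L\) and \(a\neq b\).

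The steps, in order, are as follows.

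First, I establish the premise \([a.0]\equiv_\infty[0]\) and \([a.0]\equiv^\flat_\infty[0]\). This can be done directly from \Cref{prop:ccs-prefix-erasure} with \(\alpha=a\) and \(P=0\). It can also be seen by noting that neither process has an infinite residual run: \(0\) has no transitions, and \(a.0\) has only the single step to \(0\). Hence both \(\RunMean_{\CCS}\) values are \(\varnothing\), and both \(\Res_\beta\) values are \(\varnothing\).

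Second, I show \(\RunMean_{\CCS}(C[a.0])\neq\varnothing\). The synchronisation rule gives \(\tau\)-step \([C[a.0]]\xrightarrow{\tau}[(0\mid L)\setminus\{a\}]\). By the constant, parallel and restriction rules, using \(b\notin\{a\}\), we have \([(0\mid L)\setminus\{a\}]\xrightarrow{b}[(0\mid L)\setminus\{a\}]\). Iterating this self-loop yields an infinite residual run. Its meaning is the singleton \(\{[(0\mid L)\setminus\{a\}]\}\), which is nonempty by \Cref{thm:tail-closure}. So \(\RunMean_{\CCS}(C[a.0])\) contains a nonempty compact set, and \(\Res_\beta(C[a.0])\neq\varnothing\).

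Third, I show \(C[0]\) has no infinite residual run. This is the only step requiring care, because it needs an exhaustive analysis of transitions rather than the exhibition of one. I would enumerate the derivable transitions of \((0\mid\overline a.L)\setminus\{a\}\).
\begin{itemize}
\item The left component \(0\) contributes nothing, which rules out both interleaving and synchronisation from that side.
\item The right component offers only \(\overline a\).
\item The restriction rule forbids labels in \(\{a,\overline a\}\).
\end{itemize}
Hence the process has no outgoing transition at all. Working modulo \(\equiv\) causes no trouble, since the transition relation is invariant under structural congruence. Thus \(\Runs_\omega(C[0])=\varnothing\), so \(\RunMean_{\CCS}(C[0])=\varnothing\) and \(\Res_\beta(C[0])=\varnothing\).

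Comparing the second and third steps gives \(C[a.0]\not\equiv_\infty C[0]\) and \(C[a.0]\not\equiv^\flat_\infty C[0]\). Together with the first step, this shows neither equivalence is preserved by the context \(C\), and so neither is a congruence.
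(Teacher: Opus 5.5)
Your proposal is correct and matches the paper's proof: the same pair \([a.0]\equiv_\infty[0]\), the same separating context \(((-)\mid\overline a.L)\setminus\{a\}\), and the same contrast between \(C[a.0]\), which has an infinite run after the initial synchronisation, and \(C[0]\), which has no transitions at all. Your exhaustive check of the transitions of \(C[0]\), and your remark that the singleton meaning makes the flattened set nonempty, spell out details that the paper leaves to \Cref{ex:ccs-prefix-not-congruence}.
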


\begin{proof}
The processes \([a.0]\) and \([0]\) are equivalent for both relations, as noted in \Cref{ex:ccs-prefix-not-congruence}.  The context
\[
  C[-]=((- )\mid\overline a.L)\setminus\{a\}
\]
separates them: \(C[a.0]\) has an infinite residual run after the initial synchronisation, whereas \(C[0]\) has none.  Hence both the run-sensitive and flattened residual semantics distinguish the two contextual instances.
\end{proof}

The reason is operational.  The residual tail of a composite process may depend on synchronisation possibilities that are invisible in the separate asymptotic meanings of its components.  The prefix in \(a.0\) is absent from every infinite tail of the isolated process, but in the context above it is consumed before the tail begins and thereby exposes the looping process \(L\).

Consequently, one should not expect a binary operation
\[
  \K(\beta\CCS)\times\K(\beta\CCS)
  \longrightarrow
  \K(\beta\CCS)
\]
which computes the run-sensitive meaning of \(P\mid Q\) from the run-sensitive meanings of \(P\) and \(Q\) alone.  Extra operational information is needed: at least the enabled initial actions relevant to synchronisation, and in general the way the two residual streams are correlated by a common schedule.  This is why the relational construction of \Cref{sec:relational} compactifies tuples of observations along the same asymptotic view of time, rather than replacing joint behaviour by a product of separate meanings.

For the rest of this paper, the positive algebraic laws are therefore used only for finite prefix-choice structure.  Parallel composition remains part of the residual transition system whose infinite runs are collected; it is not collapsed by the prefix-choice normalisation theorem.

\subsection{The discrete process topology}

In the discrete topology every subset of \(\CCS\) is clopen.  Principal points of \(\beta\CCS\) are actual structural-congruence classes of CCS processes.  Non-principal points of \(\beta\CCS\setminus\CCS\) represent asymptotic escape through infinite families of residual processes.

\begin{proposition}[Residual recurrence in the discrete topology]
\label{prop:ccs-residual-recurrence}
Let
\[
  \rho=([P_0],[P_1],[P_2],\ldots)
\]
be an infinite residual run in the discrete space \(\CCS\).  Then:
\begin{enumerate}[label=(\roman*),leftmargin=2.5em]
\item
For \([Q]\in\CCS\),
\[
  [Q]\in\M_{\CCS}(\rho)
  \quad\Longleftrightarrow\quad
  [P_n]=[Q]\text{ for infinitely many }n.
\]
\item
For \(C\subseteq\CCS\),
\[
  \M_{\CCS}(\rho)\subseteq\widehat C
  \quad\Longleftrightarrow\quad
  [P_n]\in C\text{ eventually}.
\]
\item
For \(C\subseteq\CCS\),
\[
  \M_{\CCS}(\rho)\cap\widehat C\neq\varnothing
  \quad\Longleftrightarrow\quad
  [P_n]\in C\text{ for infinitely many }n.
\]
\end{enumerate}
\end{proposition}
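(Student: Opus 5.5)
The plan is to obtain all three parts as special cases of \Cref{thm:discrete-temporal}, applied to the stream \(\rho:\N\to\CCS\) with the cofinite filter on \(\N\). With the discrete topology every subset of \(\CCS\) is clopen, and \(\cl_{\beta\CCS}C=\widehat C\). The corollary is therefore available for every \(C\subseteq\CCS\) with no further hypotheses.

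Parts (ii) and (iii) will be read off directly. By \Cref{thm:discrete-temporal}(i), \(\M_{\CCS}(\rho)\subseteq\widehat C\) holds exactly when \(\rho^{-1}(C)\) is cofinite, that is, when \([P_n]\in C\) for all sufficiently large \(n\). By \Cref{thm:discrete-temporal}(ii), \(\M_{\CCS}(\rho)\cap\widehat C\neq\varnothing\) holds exactly when \(\rho^{-1}(C)\) meets every cofinite set. That is the same as \(\rho^{-1}(C)\) being infinite, i.e.\ \([P_n]\in C\) for infinitely many \(n\).

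Part (i) will be derived from (iii) with \(C=\{[Q]\}\). The one point needing care is identifying \(\widehat{\{[Q]\}}\) with the singleton \(\{[Q]\}\) in \(\beta\CCS\). An ultrafilter on \(\CCS\) containing \(\{[Q]\}\) is the principal ultrafilter at \([Q]\), which under the standard identification is the point \([Q]\) itself. Hence \(\M_{\CCS}(\rho)\cap\widehat{\{[Q]\}}\neq\varnothing\) if and only if \([Q]\in\M_{\CCS}(\rho)\), and (iii) then gives the stated equivalence.

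I do not expect a real obstacle. The only places where care is needed are this identification of principal ultrafilters with points, and the translation of ``\(\Filt\)-positive for the cofinite filter'' into ``infinite''. The second follows because a set meets every cofinite set exactly when it is unbounded in \(\N\).
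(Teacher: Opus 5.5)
Your proposal is correct and follows essentially the same route as the paper: both specialise \Cref{thm:discrete-temporal} to \(T=\N\) with the cofinite filter, \(X=\CCS\) and \(a=\rho\). Both also get (i) by taking \(C=\{[Q]\}\) and noting that \(\widehat{\{[Q]\}}\) is the singleton consisting of the principal point \([Q]\).
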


\begin{proof}
This is \Cref{thm:discrete-temporal} with \(T=\N\), the cofinite filter, \(X=\CCS\), and \(a=\rho\).  For (i), take \(C=\{[Q]\}\).  Since \([Q]\) is a principal point of \(\beta\CCS\), the clopen set \(\widehat{\{[Q]\}}\) is the singleton \(\{[Q]\}\).
\end{proof}

\subsection{Stable and recurrent divergence}

\begin{example}[Stable divergence]
\label{ex:ccs-stable-divergence}
Let
\[
  A\stackrel{\mathrm{def}}{=}a.A.
\]
There is an infinite run
\[
  [A]\xrightarrow{a}[A]\xrightarrow{a}[A]\xrightarrow{a}\cdots.
\]
The associated residual stream is constant.  Hence
\[
  \M_{\CCS}(\rho)=\{[A]\}.
\]
\end{example}

\begin{example}[Finite recurrent divergence]
\label{ex:ccs-finite-recurrent-divergence}
Let
\[
  P\stackrel{\mathrm{def}}{=}a.Q,
  \qquad
  Q\stackrel{\mathrm{def}}{=}b.P.
\]
The evident infinite run alternates between \([P]\) and \([Q]\):
\[
  [P]\xrightarrow{a}[Q]\xrightarrow{b}[P]\xrightarrow{a}[Q]\xrightarrow{b}\cdots.
\]
Its residual meaning is
\[
  \M_{\CCS}(\rho)=\{[P],[Q]\}.
\]
\end{example}

Both meanings are contained in the principal part \(\CCS\subseteq\beta\CCS\).  The first is a singleton; the second is a finite recurrent set.

\subsection{Unbounded residual growth}

\begin{example}[Growing residuals]
\label{ex:ccs-growing-residuals}
Let
\[
  G\stackrel{\mathrm{def}}{=}a.(G\mid G).
\]
There is an infinite run whose \(n\)-th residual is, up to associativity, commutativity and unit laws for \(\mid\), a parallel composition of \(n+1\) copies of \(G\):
\[
  [G]
  \xrightarrow{a}
  [G\mid G]
  \xrightarrow{a}
  [G\mid G\mid G]
  \xrightarrow{a}
  \cdots.
\]
\end{example}

\begin{proposition}[Escape of the growing run]
\label{prop:ccs-growing-escape}
Let \(\rho_G\) be the run in \Cref{ex:ccs-growing-residuals}.  Then
\[
  \M_{\CCS}(\rho_G)\cap\CCS=\varnothing.
\]
Equivalently,
\[
  \M_{\CCS}(\rho_G)\subseteq\beta\CCS\setminus\CCS.
\]
\end{proposition}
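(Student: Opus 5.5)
The plan is to reduce the statement to \Cref{prop:ccs-residual-recurrence}(i). In the discrete space \(\CCS\), a principal point \([Q]\) lies in \(\M_{\CCS}(\rho_G)\) exactly when \([P_n]=[Q]\) for infinitely many \(n\). So it suffices to show that the residual stream \(n\mapsto[P_n]\) of \(\rho_G\) is injective. Then no class occurs more than once, every principal point is excluded, and hence \(\M_{\CCS}(\rho_G)\cap\CCS=\varnothing\). The equivalent formulation is immediate because \(\CCS\subseteq\beta\CCS\). Nonemptiness, which lands the meaning in the remainder rather than making it trivially empty, is already given by \Cref{thm:tail-closure}, though it is not needed for the stated inclusion.

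The key step, and the only real obstacle, is injectivity: for \(m\neq n\), the parallel composition of \(m+1\) copies of \(G\) is not structurally congruent to one with \(n+1\) copies. I would define a count \(\#_G\) on terms that is invariant under \(\equiv\). Set \(\#_G(G)=1\), \(\#_G(0)=0\), and \(\#_G(P\mid Q)=\#_G(P)+\#_G(Q)\). Every other constructor (prefix, sum, restriction, relabelling, other constants) is treated as an atom contributing \(0\), or, more robustly, we count top-level parallel occurrences of the constant \(G\). Each generating law is then checked to preserve \(\#_G\). The unit law adds \(0\), commutativity and associativity preserve sums, and the congruence closure is handled by a routine induction on derivations. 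The induction uses that \(\#_G\) is defined compositionally on \(\mid\) and is constant (zero) on the other constructors, so replacing a congruent subterm under a prefix, sum, restriction or relabelling does not change the top-level count.

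Two subtleties need to be flagged. First, \(G\) is a constant, and defining equations are explicitly not part of \(\equiv\), so \(G\) is never identified with \(a.(G\mid G)\). This is exactly what makes \(\#_G\) well defined. Second, the \(n\)-th residual of the run really has \(\#_G=n+1\). This is shown by induction along the run. The step \([G\mid\cdots\mid G]\xrightarrow{a}[\cdots]\) unfolds one copy of \(G\) to \(a.(G\mid G)\) and fires the prefix, replacing one \(G\) by \(G\mid G\) and so raising the count by one.

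Alternatively, the argument can be phrased through the resource-observation example. \(\#_G\) is a continuous map \(r:\CCS\to\N\), since \(\CCS\) is discrete, and \(r(\rho_G(n))=n+1\to\infty\). By \Cref{thm:functoriality} this gives \(\beta r[\M_{\CCS}(\rho_G)]\subseteq\N^*\). Since \(\beta r\) maps principal points to principal points, no principal \([Q]\) can lie in the meaning. I would present the direct injectivity argument and mention this as the observational reading.
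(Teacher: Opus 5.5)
Your proposal is correct and follows the paper's route: reduce to \Cref{prop:ccs-residual-recurrence}(i) and observe that the residual classes $[G^{(n+1)}]$ are pairwise distinct, so no principal point recurs. Your $\equiv$-invariant count $\#_G$ explicitly justifies the distinctness step, which the paper only asserts. Your alternative via a resource observable is the argument the paper gives later in \Cref{cor:ccs-growing-unbounded-observable}.
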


\begin{proof}
Each transition fires one copy of \(G\), replacing it by two copies.  Thus the number of parallel copies increases by one at each step.  Since the process classes are taken modulo the usual structural congruence for parallel composition, the residual after \(n\) steps is the class of a parallel composition of \(n+1\) copies of \(G\).  These classes are pairwise distinct.

By \Cref{prop:ccs-residual-recurrence}(i), a principal point \([Q]\in\CCS\) belongs to \(\M_{\CCS}(\rho_G)\) exactly when \([Q]\) occurs infinitely often along the run.  No residual class occurs more than once.  Hence no principal point lies in the meaning.
\end{proof}

\begin{corollary}[Separation of stable and growing divergence]
\label{cor:ccs-stable-growing-separation}
Let
\[
  L\stackrel{\mathrm{def}}{=}a.L
  \qquad\text{and}\qquad
  G\stackrel{\mathrm{def}}{=}a.(G\mid G).
\]
The evident infinite run of \(L\) has residual meaning \(\{[L]\}\), while the evident growing run of \(G\) has residual meaning contained in \(\beta\CCS\setminus\CCS\).  Thus the two runs are set-theoretically separated by the principal part \(\CCS\subseteq\beta\CCS\).
\end{corollary}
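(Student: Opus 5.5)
The plan is to combine two facts already in hand: \Cref{ex:ccs-stable-divergence} for the stable loop, and \Cref{prop:ccs-growing-escape} for the growing run. The renaming from \(A\) to \(L\) in \Cref{ex:ccs-stable-divergence} is harmless. Note also that this \(L\) is defined by \(L\stackrel{\mathrm{def}}{=}a.L\), not the \(b\)-loop of \Cref{ex:ccs-prefix-not-congruence}.

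First, the evident run of \(L\) is \([L]\xrightarrow{a}[L]\xrightarrow{a}\cdots\). The unfolding rule for constants gives \([L]\xrightarrow{a}[L]\), so this is the constant stream \(n\mapsto[L]\). By \Cref{thm:tail-closure}, each tail closure is \(\cl_{\beta\CCS}\{[L]\}=\{[L]\}\), since points are closed in the Hausdorff space \(\beta\CCS\). Hence the meaning is \(\{[L]\}\). Alternatively, \Cref{prop:ccs-residual-recurrence}(i) puts \([L]\) in the meaning. For the reverse inclusion, apply \Cref{thm:discrete-temporal}(i) with \(C=\{[L]\}\): the meaning lies in \(\widehat{\{[L]\}}=\{[L]\}\).

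Second, the growing run of \(G\) has meaning contained in \(\beta\CCS\setminus\CCS\). This is exactly \Cref{prop:ccs-growing-escape}, which I would cite directly.

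Finally, for the separation, \(\{[L]\}\subseteq\CCS\) while the other meaning lies in the complement of \(\CCS\), so the two compact sets are disjoint. Both meanings are nonempty by \Cref{thm:tail-closure}, so the separation is not vacuous. The principal part \(\CCS\) is a subset of \(\beta\CCS\) containing the first meaning and disjoint from the second.

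The only point needing care, and the main obstacle, is the input to \Cref{prop:ccs-growing-escape}: the classes of \(n+1\) parallel copies of \(G\) must be pairwise distinct modulo \(\equiv\). I would justify this by the multiset-of-parallel-components invariant. The associativity, commutativity and unit laws preserve the multiset of non-\(0\) parallel components, and \(G\) is a constant, hence not itself a parallel composition. So the number of \(G\)-components is an invariant of the class. Since the paper already relies on this inside \Cref{prop:ccs-growing-escape}, I would simply invoke that proposition.
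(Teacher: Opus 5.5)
Your proof is correct and follows the paper's route: the paper also cites the stable-divergence example for \(L\) and \Cref{prop:ccs-growing-escape} for \(G\), and then notes that \(\{[L]\}\subseteq\CCS\) while the growing run's meaning avoids \(\CCS\). Your extra checks are sound elaborations of steps the paper leaves implicit. These are the direct tail-closure computation for the constant stream, and the invariant that the number of parallel \(G\)-components is preserved by structural congruence, which shows the classes \([G^{(m)}]\) are pairwise distinct.
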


\begin{proof}
The run of \(L\) is the stable-divergence example.  The run of \(G\) is covered by \Cref{prop:ccs-growing-escape}.
\end{proof}

\subsection{Escape detected by unbounded observables}
\label{subsec:ccs-unbounded-observables}

The previous example detects escape by showing that no residual process recurs exactly.  Often one wants a more concrete explanation, expressed through an ordinary observable rather than through the set-theoretic principal part of \(\beta\CCS\).  The following proposition says that escape can be certified by an integer-valued observation, such as size, number of parallel components, number of pending messages, or another resource measure.  Individual points of \(\beta\CCS\setminus\CCS\) need not be inspected; it is enough to look at their image under such observations.

\begin{definition}[Resource observable]
A resource observable on the discrete process space \(\CCS\) is a function
\[
  r:\CCS\to\N,
\]
where \(\N\) is given the discrete topology.  For \(k\in\N\), write
\[
  B_{\leq k}^r
  =
  \{[P]\in\CCS:r([P])\leq k\}
\]
for the region where the resource is bounded by \(k\).
\end{definition}

\begin{proposition}[Escape detected by an unbounded observable]
\label{prop:ccs-unbounded-observable-escape}
Let
\[
  \rho=([P_0],[P_1],[P_2],\ldots)
\]
be an infinite residual run in the discrete process space \(\CCS\), and let \(r:\CCS\to\N\) be a resource observable.  Suppose that
\[
  r([P_n])\to\infty
\]
in the sense that, for every \(k\in\N\), there is \(N\) such that \(r([P_n])>k\) for all \(n\geq N\).  Then
\[
  \beta r[\M_{\CCS}(\rho)]
  =
  \M_{\N}(r\circ\rho)
  \subseteq
  \N^*.
\]
Moreover, for every \(k\in\N\),
\[
  \M_{\CCS}(\rho)
  \cap
  \cl_{\beta\CCS} B_{\leq k}^r
  =
  \varnothing.
\]
In particular,
\[
  \M_{\CCS}(\rho)\cap\CCS=\varnothing.
\]
\end{proposition}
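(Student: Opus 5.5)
The plan is to derive all three claims from functoriality and the clopen temporal reading, with no direct manipulation of ultrafilters. Since \(\CCS\) and \(\N\) are both discrete, \(r\) is automatically continuous. \Cref{thm:functoriality} therefore gives at once
\[
  \beta r[\M_{\CCS}(\rho)]=\M_{\N}(r\circ\rho).
\]
To show this set lies in \(\N^*\), I would use \Cref{prop:ccs-residual-recurrence}(i) applied in the discrete space \(\N\), or equivalently \Cref{thm:discrete-temporal} with \(A=\{k\}\). A principal point \(k\in\N\) belongs to \(\M_{\N}(r\circ\rho)\) exactly when \(r([P_n])=k\) for infinitely many \(n\). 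The hypothesis \(r([P_n])\to\infty\) rules this out, so no principal point lies in the meaning, and hence \(\M_{\N}(r\circ\rho)\subseteq\N^*\).

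For the second claim, fix \(k\). The set \(B^r_{\leq k}\) is clopen, since \(\CCS\) is discrete. By the hypothesis, its preimage \(\rho^{-1}(B^r_{\leq k})=\{n: r([P_n])\leq k\}\) is finite, so it is not positive for the cofinite filter. \Cref{prop:clopen-temporal}(ii), or \Cref{prop:ccs-residual-recurrence}(iii), then gives
\[
  \M_{\CCS}(\rho)\cap\cl_{\beta\CCS}B^r_{\leq k}=\varnothing.
\]
An alternative route is to use the complement: \(\{n: r([P_n])>k\}\) is cofinite, so by part (i) the meaning is contained in \(\cl_{\beta\CCS}\) of the complement of \(B^r_{\leq k}\). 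That closure is disjoint from \(\cl_{\beta\CCS}B^r_{\leq k}\), because the closures of complementary clopen sets are complementary in \(\beta X\).

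For the final claim, every principal point \([Q]\in\CCS\) satisfies \([Q]\in B^r_{\leq k}\) with \(k=r([Q])\), and so \([Q]\in\cl_{\beta\CCS}B^r_{\leq k}\). Taking the union over \(k\) shows that \(\CCS\subseteq\bigcup_k\cl_{\beta\CCS}B^r_{\leq k}\). By the second claim, the meaning misses each of these sets, so it misses \(\CCS\).

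I do not expect a real obstacle. The only point needing care is the bookkeeping that identifies \(\cl_{\beta\CCS}\) of a clopen set with the corresponding \(\widehat{\ }\) set, so that the temporal reading applies exactly as stated. The finiteness of \(\{n:r([P_n])\leq k\}\) follows immediately from the definition of \(r([P_n])\to\infty\).
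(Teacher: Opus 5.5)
Your proof is correct, and it differs from the paper's only in the middle claim. The first and third claims match the paper: functoriality together with the discrete temporal reading at singletons gives \(\M_{\N}(r\circ\rho)\subseteq\N^*\), and the covering \(\CCS\subseteq\bigcup_k\cl_{\beta\CCS}B^r_{\leq k}\) gives the last claim.

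\textbf{The paper's route.} It derives the middle claim from the first one. If \(p\) lay in \(\M_{\CCS}(\rho)\cap\cl_{\beta\CCS}B^r_{\leq k}\), continuity of \(\beta r\) would put \(\beta r(p)\) in \(\cl_{\beta\N}r[B^r_{\leq k}]\subseteq\{0,\ldots,k\}\). This contradicts \(\beta r(p)\in\N^*\).

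\textbf{Your route.} You apply \Cref{prop:clopen-temporal}(ii) directly in the process space. The preimage \(\rho^{-1}(B^r_{\leq k})\) is finite, hence not positive for the cofinite filter.

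\textbf{What each buys.} Your argument is more elementary. It makes the second and third claims independent of functoriality and of the \(\N^*\) statement, using only the run hypothesis and the clopen reading in \(\beta\CCS\). The paper's argument shows that the disjointness is forced by the observed image landing in \(\N^*\) alone. In fact it proves more: any subset of \(\beta\CCS\) that \(\beta r\) maps into \(\N^*\) misses every \(\cl_{\beta\CCS}B^r_{\leq k}\). That fits the paper's theme of reading compact meanings through their observational images.
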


\begin{proof}
Since \(\CCS\) is discrete, the map \(r:\CCS\to\N\) is continuous.  By functoriality,
\[
  \M_{\N}(r\circ\rho)
  =
  \beta r[\M_{\CCS}(\rho)].
\]
The stream \(r\circ\rho:\N\to\N\) eventually avoids every finite subset of \(\N\).  Hence no principal point \(m\in\N\) belongs to \(\M_{\N}(r\circ\rho)\): by \Cref{thm:discrete-temporal}, this would mean that the value \(m\) occurs infinitely often.  Thus
\[
  \M_{\N}(r\circ\rho)\subseteq\N^*.
\]

Now fix \(k\).  If
\[
  p\in
  \M_{\CCS}(\rho)
  \cap
  \cl_{\beta\CCS}B_{\leq k}^r,
\]
then continuity of \(\beta r\) gives
\[
  \beta r(p)
  \in
  \cl_{\beta\N}r[B_{\leq k}^r]
  \subseteq
  \{0,1,\ldots,k\}.
\]
This contradicts \(\beta r(p)\in\N^*\).  Therefore the displayed intersection is empty for every \(k\).

Finally, if a principal process point \([Q]\in\CCS\) belonged to \(\M_{\CCS}(\rho)\), then with \(k=r([Q])\) it would belong to \(B_{\leq k}^r\), and hence to \(\cl_{\beta\CCS}B_{\leq k}^r\), contradicting the previous paragraph.
\end{proof}

\begin{remark}
The converse is not asserted.  A run may escape from the principal part because it visits infinitely many different residual processes, even though a particular chosen resource observable stays bounded.  The proposition is useful when a concrete unbounded quantity explains the escape.
\end{remark}

\begin{corollary}[The growing CCS process escapes by an unbounded resource]
\label{cor:ccs-growing-unbounded-observable}
Let \(G\) and \(\rho_G\) be as in \Cref{ex:ccs-growing-residuals}.  For \(m\geq1\), write
\[
  G^{(m)}
  =
  \underbrace{G\mid\cdots\mid G}_{m\text{ copies}},
\]
modulo the associativity, commutativity and unit laws for parallel composition.  Define
\[
  r_G([P])
  =
  \begin{cases}
    m, & \text{if }[P]=[G^{(m)}]\text{ for some }m\geq1,\\
    0, & \text{otherwise.}
  \end{cases}
\]
Then \(r_G\) is a resource observable and, along the growing run,
\[
  r_G(\rho_G(n))=n+1.
\]
Consequently,
\[
  \beta r_G[\M_{\CCS}(\rho_G)]\subseteq\N^*,
\]
and \(\M_{\CCS}(\rho_G)\) is disjoint from the closure of every region in which the number of parallel \(G\)-components is bounded.
\end{corollary}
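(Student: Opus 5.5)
The corollary will follow directly from \Cref{prop:ccs-unbounded-observable-escape}. Three things need checking: that \(r_G\) is a resource observable, that it is well defined, and that it grows without bound along \(\rho_G\).

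First, \(r_G\) is well defined. For this I need that \([G^{(m)}]=[G^{(m')}]\) forces \(m=m'\). This is the pairwise distinctness of the classes \([G^{(m)}]\) already used in the proof of \Cref{prop:ccs-growing-escape}. To justify it, I would use an invariant of structural congruence: the multiset of non-\(0\) parallel components after flattening \(\mid\). It is preserved by the associativity, commutativity and unit laws and by the congruence rules. Since \(G\) is a constant and not \(0\), \(G^{(m)}\) has exactly \(m\) components, each equal to \(G\). Once well defined, \(r_G\) is a function \(\CCS\to\N\), and so a resource observable by definition; continuity is automatic because \(\CCS\) carries the discrete topology.

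Second, I need the values along the run. \Cref{ex:ccs-growing-residuals} gives \(\rho_G(n)=[G^{(n+1)}]\). The reason is that each step fires one copy \(G\xrightarrow{a}G\mid G\) via the unfolding rule and interleaving, so the \(n\)-th residual is congruent to \(G^{(n+1)}\). Hence \(r_G(\rho_G(n))=n+1\), and in particular \(r_G(\rho_G(n))>k\) for all \(n\geq k\). This is exactly the hypothesis \(r([P_n])\to\infty\) of \Cref{prop:ccs-unbounded-observable-escape}.

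Third, applying that proposition gives \(\beta r_G[\M_{\CCS}(\rho_G)]=\M_\N(r_G\circ\rho_G)\subseteq\N^*\). It also gives \(\M_{\CCS}(\rho_G)\cap\cl_{\beta\CCS}B^{r_G}_{\leq k}=\varnothing\) for every \(k\). The final clause of the corollary refers to a "region in which the number of parallel \(G\)-components is bounded." I would read this as any subset of some \(B^{r_G}_{\leq k}\). Since closure is monotone, the disjointness then passes to the closure of any such subset.

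The only step with real content is the well-definedness of \(r_G\), namely that structural congruence does not identify different numbers of copies. I would dispatch it using the component-multiset invariant, or simply cite the distinctness already asserted in \Cref{prop:ccs-growing-escape}. One wrinkle remains: the clause "otherwise \(0\)" assigns the value \(0\) to unrelated processes. This is harmless, because the run only ever visits classes of the form \([G^{(m)}]\).
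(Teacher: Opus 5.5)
Your proof is correct and follows the paper's route: compute \(\rho_G(n)=[G^{(n+1)}]\), so \(r_G(\rho_G(n))=n+1\to\infty\), then invoke \Cref{prop:ccs-unbounded-observable-escape}. Your extra checks are sound details the paper leaves implicit. These are the well-definedness of \(r_G\) via the flattened-component invariant (taken modulo \(\equiv\) on components, so that the congruence rules preserve it) and closure monotonicity for regions contained in some \(B^{r_G}_{\leq k}\).
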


\begin{proof}
The residual after \(n\) transitions in the growing run is \([G^{(n+1)}]\).  Thus \(r_G(\rho_G(n))=n+1\), which tends to infinity.  The conclusion follows from \Cref{prop:ccs-unbounded-observable-escape}.
\end{proof}

\subsection{Nondeterministic residual meanings}

\begin{example}[Branching into stable and growing behaviour]
\label{ex:ccs-branching-stable-growing}
Let
\[
  B\stackrel{\mathrm{def}}{=}a.L + a.G,
\]
where
\[
  L\stackrel{\mathrm{def}}{=}a.L,
  \qquad
  G\stackrel{\mathrm{def}}{=}a.(G\mid G).
\]
From \([B]\) there is an \(a\)-transition to \([L]\) and an \(a\)-transition to \([G]\).  Thus \(\RunMean_{\CCS}([B])\) contains the stable meaning \(\{[L]\}\) and also contains the non-principal compact meaning arising from the growing run of \(G\).
\end{example}

This example illustrates the difference between run-sensitive and flattened meanings.  The run-sensitive semantics records that the stable and growing behaviours arise from different executions.  The flattened residual set records the union of the asymptotic points that may arise.

\subsection{Observation topologies and modal observations}
\label{subsec:ccs-observation-topologies}

The discrete topology distinguishes all structural-congruence classes.  Coarser process observations may be obtained from a Boolean algebra of observable process properties.

Let \(\mathcal B\subseteq\pow(\CCS)\) be a Boolean algebra of subsets of \(\CCS\).  The topology generated by \(\mathcal B\) as a clopen base is zero-dimensional.  If \(\mathcal B\) separates points, this topology is Hausdorff and hence Tychonoff.  If \(\mathcal B\) does not separate points, replace \(\CCS\) by the quotient
\[
  \CCS/{\equiv_{\mathcal B}},
\]
where
\[
  [P]\equiv_{\mathcal B}[Q]
  \quad\Longleftrightarrow\quad
  \forall B\in\mathcal B\
  ([P]\in B\Leftrightarrow [Q]\in B).
\]
The induced Boolean algebra separates the quotient points, so the quotient carries a zero-dimensional Hausdorff topology.  In the rest of this subsection, \(\CCS_{\mathcal B}\) denotes either the separated observation space itself or this quotient.

A standard source of observations is a modal process logic.  Let \(\mathcal L\) be a Hennessy--Milner style modal language interpreted over CCS processes \cite{HennessyMilner1980,HennessyMilner1985}, and let
\[
  [\![\varphi]\!]
  =
  \{[P]\in\CCS:[P]\models\varphi\}
\]
be the satisfaction set of a formula \(\varphi\).  A Boolean algebra generated by such satisfaction sets gives an observation topology as above.

For an infinite residual run \(\rho=([P_0],[P_1],[P_2],\ldots)\), write
\[
  q_{\mathcal B}:\CCS\to\CCS_{\mathcal B}
\]
for the observation quotient map, and consider the observed residual run
\[
  q_{\mathcal B}\circ\rho:\N\to\CCS_{\mathcal B}.
\]

\begin{proposition}[Modal eventuality and recurrence]
\label{prop:ccs-modal-eventuality}
Let \(\varphi\) be a process formula whose satisfaction set belongs to the observation algebra \(\mathcal B\).  Let \(C_\varphi\subseteq\CCS_{\mathcal B}\) be the clopen set corresponding to \([\![\varphi]\!]\).  Then
\[
  \M_{\CCS_{\mathcal B}}(q_{\mathcal B}\circ\rho)
  \subseteq
  \cl_{\beta\CCS_{\mathcal B}}C_\varphi
\]
if and only if \([P_n]\models\varphi\) eventually.  Also,
\[
  \M_{\CCS_{\mathcal B}}(q_{\mathcal B}\circ\rho)
  \cap
  \cl_{\beta\CCS_{\mathcal B}}C_\varphi
  \neq\varnothing
\]
if and only if \([P_n]\models\varphi\) for infinitely many \(n\).
\end{proposition}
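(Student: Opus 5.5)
The plan is to reduce the statement to \Cref{prop:clopen-temporal}, applied to the observed run \(q_{\mathcal B}\circ\rho:\N\to\CCS_{\mathcal B}\) with the cofinite filter on \(\N\). The space \(\CCS_{\mathcal B}\) is Tychonoff by the convention fixed above: it is zero-dimensional Hausdorff, either because \(\mathcal B\) separates points or because we have passed to the quotient. The set \(C_\varphi\) is clopen there, since \([\![\varphi]\!]\in\mathcal B\) and \(\mathcal B\) is a clopen base.

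The first substantive step is to check the preimage identity
\[
  (q_{\mathcal B}\circ\rho)^{-1}(C_\varphi)=\{n:[P_n]\models\varphi\}.
\]
In the separated case \(q_{\mathcal B}\) is the identity and \(C_\varphi=[\![\varphi]\!]\), so the identity is immediate. In the quotient case, every \(B\in\mathcal B\) is a union of \(\equiv_{\mathcal B}\)-classes, so \(q_{\mathcal B}^{-1}(C_\varphi)=[\![\varphi]\!]\) for \(C_\varphi=q_{\mathcal B}[[\![\varphi]\!]]\). This saturation is the only point needing care, and it follows directly from the definition of \(\equiv_{\mathcal B}\).

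With this in hand, \Cref{prop:clopen-temporal}(i) shows that containment in \(\cl_{\beta\CCS_{\mathcal B}}C_\varphi\) is equivalent to \(\{n:[P_n]\models\varphi\}\) belonging to the cofinite filter. That means \(\varphi\) holds eventually. Likewise, \Cref{prop:clopen-temporal}(ii) shows that nonempty intersection is equivalent to this set being positive for the cofinite filter. A set meets every cofinite set exactly when it is infinite, so this means \(\varphi\) holds for infinitely many \(n\).

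As a consistency check, one could also obtain the statement by applying \Cref{thm:functoriality} to \(q_{\mathcal B}\), which is continuous because preimages of basic clopens lie in \(\mathcal B\), and then working in \(\beta\CCS\). However, the direct route above already suffices.
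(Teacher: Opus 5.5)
Your proposal is correct and follows the paper's proof: both apply \Cref{prop:clopen-temporal} to the observed run \(q_{\mathcal B}\circ\rho\), with the clopen set \(C_\varphi\) and the cofinite filter on \(\N\). Your saturation check that \((q_{\mathcal B}\circ\rho)^{-1}(C_\varphi)=\{n:[P_n]\models\varphi\}\), and your translation of filter membership and positivity into ``eventually'' and ``infinitely often'', spell out steps the paper leaves implicit.
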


\begin{proof}
The set \(C_\varphi\) is clopen in \(\CCS_{\mathcal B}\) by construction of the observation topology.  Apply \Cref{prop:clopen-temporal} to the stream \(q_{\mathcal B}\circ\rho\) and to the clopen set \(C_\varphi\), using the cofinite filter on \(\N\).
\end{proof}

\begin{example}[Persistent enabledness]
Let
\[
  \varphi=\langle a\rangle\top.
\]
Then \([P]\models\varphi\) says that the residual process \([P]\) has an outgoing \(a\)-transition.  For a residual run \(([P_n])_{n<\omega}\), eventual inclusion in the clopen region for \(\varphi\) says that, from some point onward, every residual process has an \(a\)-transition.  Nonempty intersection with the same clopen region says that \(a\)-enabled residuals occur infinitely often.
\end{example}

\subsection{Relational comparison of residual runs}

Let \(R\subseteq\CCS\times\CCS\) be a relation on residual processes.  Examples include equality of structural-congruence classes, strong bisimilarity, weak bisimilarity, simulation, or agreement on a chosen observation algebra.  Given two residual runs
\[
  \rho=([P_n])_{n<\omega},
  \qquad
  \sigma=([Q_n])_{n<\omega},
\]
their paired run is
\[
  \langle\rho,\sigma\rangle:\N\to\CCS\times\CCS,
  \qquad
  n\mapsto([P_n],[Q_n]).
\]

\begin{proposition}[Asymptotic relational matching]
\label{prop:ccs-relational-matching}
Assume \(\CCS\) is given the discrete topology.  Then
\[
  \M_{\CCS\times\CCS}(\langle\rho,\sigma\rangle)
  \subseteq
  \widehat R
\]
if and only if \(([P_n],[Q_n])\in R\) eventually.  Also,
\[
  \M_{\CCS\times\CCS}(\langle\rho,\sigma\rangle)
  \cap
  \widehat R
  \neq\varnothing
\]
if and only if \(([P_n],[Q_n])\in R\) for infinitely many \(n\).
\end{proposition}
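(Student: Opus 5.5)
This is a direct instance of \Cref{thm:discrete-temporal}. The plan is to apply it with \(T=\N\), the cofinite filter, \(X=\CCS\times\CCS\), \(a=\langle\rho,\sigma\rangle\), and \(A=R\).

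The first step is to check that the observation space is discrete. Since \(\CCS\) carries the discrete topology, the product \(\CCS\times\CCS\) is also discrete, hence Tychonoff. Every subset of it is therefore clopen, and in particular so is \(R\). The notation \(\widehat R=\{\mathcal V\in\beta(\CCS\times\CCS):R\in\mathcal V\}\) then agrees with the convention used in \Cref{thm:discrete-temporal}: it is the closure of \(R\) in \(\beta(\CCS\times\CCS)\). Note that the compactification is taken of the product, not formed as \(\beta\CCS\times\beta\CCS\), in keeping with the relational construction of \Cref{sec:relational}.

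The second step is to compute the preimage. We have
\[
  \langle\rho,\sigma\rangle^{-1}(R)=\{n:([P_n],[Q_n])\in R\}.
\]
Under the cofinite filter, this set belongs to the filter exactly when it is cofinite, that is, when the relation holds eventually. It is filter-positive exactly when it meets every cofinite set, that is, when it is infinite. Clauses (i) and (ii) of \Cref{thm:discrete-temporal} then give the two equivalences directly.

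No real obstacle is expected. The only point needing care is confirming that \(\widehat R\) is interpreted in \(\beta(\CCS\times\CCS)\) rather than in \(\beta\CCS\times\beta\CCS\). Once that is fixed, the result is the same as the discrete-relation remark at the end of \Cref{sec:relational}.
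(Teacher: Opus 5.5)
Your proposal is correct and matches the paper's proof, which simply applies \Cref{thm:discrete-temporal} to the discrete space \(\CCS\times\CCS\) and the paired stream \(n\mapsto([P_n],[Q_n])\), with \(A=R\). Your extra checks are the details the paper leaves implicit: that the finite product of discrete spaces is discrete, and that the preimage is cofinite or infinite exactly when it lies in the cofinite filter or is positive for it.
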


\begin{proof}
Apply \Cref{thm:discrete-temporal} to the discrete output space \(\CCS\times\CCS\) and to the stream \(n\mapsto([P_n],[Q_n])\).
\end{proof}

The compactification used here is \(\beta(\CCS\times\CCS)\).  It records the relation between the two residual streams along the same ultrafilter on time.  The separate compact meanings \(\M_{\CCS}(\rho)\) and \(\M_{\CCS}(\sigma)\) do not, in general, determine this paired meaning.

\subsection{Collected CCS consequences}
\label{subsec:ccs-collected-consequences}

The preceding results give the following compact summary of the CCS application.  It collects the residual-tail information obtained from the general Stone--\v{C}ech construction, the finite abstraction results, and the process-specific prefix-choice laws.

\begin{proposition}[CCS residual-tail consequences]
\label{prop:ccs-residual-tail-consequences}
Let
\[
  \rho=([P_0],[P_1],[P_2],\ldots)
\]
be an infinite residual run in the discrete process space \(\CCS\).  Then the following hold.
\begin{enumerate}[label=(\roman*),leftmargin=2.5em]
\item
If \(\rho\) is eventually constant with value \([Q]\), then
\[
  \M_{\CCS}(\rho)=\{[Q]\}.
\]
If \(\rho\) is eventually periodic and its periodic tail visits exactly the finite set
\(S\subseteq\CCS\), then
\[
  \M_{\CCS}(\rho)=S.
\]

\item
More generally, the principal part of \(\M_{\CCS}(\rho)\) consists exactly of the residual process classes which occur infinitely often along \(\rho\).  For every \(C\subseteq\CCS\), containment of \(\M_{\CCS}(\rho)\) in \(\widehat C\) is equivalent to eventual membership of the run in \(C\), and nonempty intersection with \(\widehat C\) is equivalent to infinitely-often membership in \(C\).

\item
If \(r:\CCS\to\N\) is a resource observable and \(r([P_n])\to\infty\), then
\[
  \beta r[\M_{\CCS}(\rho)]
  =
  \M_\N(r\circ\rho)
  \subseteq
  \N^*,
\]
and \(\M_{\CCS}(\rho)\cap\CCS=\varnothing\).  Thus unbounded residual growth is visible through an integer-valued observation whenever such a resource map is available.

\item
Let \(q_{\mathcal A}:\CCS\to\CCS_{\mathcal A}\) be a finite observational quotient generated by finitely many clopen process predicates.  Then
\[
  \M_{\CCS_{\mathcal A}}(q_{\mathcal A}\circ\rho)
  =
  \beta q_{\mathcal A}[\M_{\CCS}(\rho)],
\]
and this finite meaning is exactly the set of abstract residual states visited infinitely often by \(q_{\mathcal A}\circ\rho\).

\item
If the observed run \(q_{\mathcal A}\circ\rho\) is edge-fair in the sense of \Cref{prop:edge-fair-finite-graph} as an infinite path in a finite abstract transition graph, then its finite meaning is a reachable terminal strongly connected component of that graph.  Conversely, every reachable terminal strongly connected component which supports an infinite path is the meaning of some edge-fair infinite path, with the usual convention that deadlocked terminal states may be given stuttering self-loops.

\item
For every finite prefix-choice wrapper \(E\),
\[
  \RunMean_{\CCS}([E])
  =
  \bigcup_{R\in\Leaves(E)}\RunMean_{\CCS}([R])
\]
and the analogous equality holds for \(\Res_\beta\).  These equalities are residual-tail laws for outer prefix-choice structure.  They are not congruence laws for arbitrary CCS contexts: parallel composition and synchronisation can distinguish processes with the same isolated residual-tail semantics.
\end{enumerate}
\end{proposition}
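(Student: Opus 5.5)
This is a collecting proposition, so the plan is to discharge each item by specialising an earlier result to the discrete space \(\CCS\), with \(T=\N\) and the cofinite filter. No new topological argument should be needed. The only item with genuine content is (i), which I would derive from (ii).

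For (ii), I would apply \Cref{prop:ccs-residual-recurrence}. Its part (i) says that a principal point \([Q]\) lies in \(\M_{\CCS}(\rho)\) exactly when \([Q]\) occurs infinitely often. This identifies \(\M_{\CCS}(\rho)\cap\CCS\) with the set of residual classes occurring infinitely often. Its parts (ii) and (iii), equivalently \Cref{thm:discrete-temporal}, give the eventual and infinitely-often readings for arbitrary \(C\subseteq\CCS\).

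Item (i) is where some care is needed, because I must also rule out non-principal points. Suppose the tail of \(\rho\) from some \(N\) onward takes values in the finite set \(S\), with every element of \(S\) visited infinitely often; the constant case is \(S=\{[Q]\}\). Then \(\rho^{-1}(S)\) is cofinite. By (ii), \(\M_{\CCS}(\rho)\subseteq\widehat S\). Since \(S\) is a finite subset of a discrete space, \(\widehat S=\cl_{\beta\CCS}S=S\), so there are no remainder points. Alternatively, tail-closure (\Cref{thm:tail-closure}) gives \(\M_{\CCS}(\rho)\subseteq\cl_{\beta\CCS}\{[P_n]:n\ge N\}=S\). The principal-part description from (ii) then gives \(S\subseteq\M_{\CCS}(\rho)\), so equality holds. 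This is the one step I would write out explicitly: the fact that finite sets are closed in \(\beta X\) is what removes remainder points.

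The remaining items are direct citations.
\begin{itemize}
\item Item (iii) is \Cref{prop:ccs-unbounded-observable-escape} verbatim. Continuity of \(r\) is automatic because \(\CCS\) is discrete.
\item Item (iv) is \Cref{prop:finite-observational-abstraction} applied with \(X=\CCS\). Any finite family of subsets is clopen in the discrete topology, and the cofinite filter makes positivity mean ``infinitely often''.
\item Item (v): the meaning of \(q_{\mathcal A}\circ\rho\) equals its \(\Inf\)-set by (iv). The statement then follows from \Cref{prop:edge-fair-finite-graph} applied to the finite abstract graph, and the converse is the converse half of that proposition, including the stuttering convention.
\item Item (vi) is \Cref{thm:ccs-prefix-choice-normalisation}. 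The non-congruence remark is \Cref{prop:ccs-prefix-erasure-not-contextual}, witnessed by \Cref{ex:ccs-prefix-not-congruence}.
\end{itemize}

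I expect no real obstacle. The care needed is in (i), to exclude remainder points, and in (v), to note that edge-fairness is a hypothesis on the abstract path rather than something derived.
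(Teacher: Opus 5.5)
Your proposal is correct and matches the paper's proof, which also discharges each item by citing \Cref{prop:ccs-residual-recurrence}, \Cref{prop:ccs-unbounded-observable-escape}, \Cref{prop:finite-observational-abstraction}, \Cref{prop:edge-fair-finite-graph}, \Cref{thm:ccs-prefix-choice-normalisation} and \Cref{prop:ccs-prefix-erasure-not-contextual}. In (i) you show explicitly that \(\widehat S=S\) for finite \(S\), so no remainder points survive; this only spells out what the paper compresses into ``the finite discrete case of'' \Cref{prop:ccs-residual-recurrence}.
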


\begin{proof}
Part (i) is the finite discrete case of \Cref{prop:ccs-residual-recurrence}; the eventually periodic case says that the values in the period are exactly the values occurring infinitely often.  Part (ii) is precisely \Cref{prop:ccs-residual-recurrence}.  Part (iii) is \Cref{prop:ccs-unbounded-observable-escape}.  Part (iv) is \Cref{prop:finite-observational-abstraction}, applied to the quotient map \(q_{\mathcal A}\).  Part (v) is \Cref{prop:edge-fair-finite-graph}.  Part (vi) combines \Cref{thm:ccs-prefix-choice-normalisation} with \Cref{prop:ccs-prefix-erasure-not-contextual}.
\end{proof}

\section{Mobile calculi}
\label{sec:mobile}

The CCS instance above uses residual processes in a fixed process universe.  Mobile calculi add further structure.  In the \(\pi\)-calculus, residual processes involve binding, \(\alpha\)-conversion, restriction, fresh names, and scope extrusion.  A Stone--\v{C}ech residual semantics can be formulated in the same general way once the process space has been chosen, but the choice is more delicate than for CCS.  Standard references for the \(\pi\)-calculus include Milner's book and the book of Sangiorgi and Walker \cite{Milner1999,SangiorgiWalker2001}.

One natural presentation would take \(\pi\)-processes modulo \(\alpha\)-conversion and structural congruence, with observations invariant under finite permutations of names.  Nominal sets provide a general mathematical setting for names and binding \cite{Pitts2013}.  The resulting topology should either separate the chosen quotient points or be followed by the observational quotient described earlier.  Residual runs would then be streams
\[
  P_0\to P_1\to P_2\to\cdots
\]
in this quotient process space, and their meanings would again be compact subsets of the corresponding Stone--\v{C}ech compactification.

Escape in the residual process space may now arise not only from syntactic growth or increasing parallel structure, but also from unbounded fresh-name generation, changing communication topology, or repeated scope extrusion.  For example, an observation measuring the number of active restricted names, formulated invariantly under \(\alpha\)-conversion, would give finite or infinite quotients detecting whether such name structure remains bounded, recurs, or grows without bound along a run.

A full treatment would require fixing a particular \(\pi\)-calculus syntax, transition system, structural congruence, name-invariant observation algebra, and treatment of administrative or weak transitions.  The CCS case supplies the worked instance of the residual-process construction.

\section{Related work and comparison}
\label{sec:related-work}

The ingredients of the construction are mostly classical.  Stone--\v{C}ech compactification, filters, tail cluster sets, operational transition systems and finite quotient graphs all have long histories.  The particular use made of them here is more specific: residual executions are treated as streams of observations, and their tails are collected in a compact space before finite, modal or resource-level observations are applied.  This places the paper near several bodies of work, but not squarely inside any one of them.

\subsection{Compactified cluster sets and topological dynamics}

For a sequence or net in a compact space, the set of tail accumulation points is a standard object.  In topological dynamics the corresponding construction is the \(\omega\)-limit set of an orbit \cite{Ellis1969}.  The examples in this paper differ from the usual compact dynamical setting because the natural observation spaces need not be compact.  A residual process may grow through larger and larger terms without returning to any exact residual state.  The CCS process
\[
  G=a.(G\mid G)
\]
is the simplest example used here: its residual stream escapes through larger parallel compositions, while finite observational quotients and integer-valued resource observations still give meaningful tail information.

The Stone--\v{C}ech compactification is the largest compactification of a Tychonoff space in the usual extension sense \cite{GillmanJerison1976,Engelking1989}.  The paper uses that universal property in a semantic way.  Every continuous observation \(f:X\to Y\) extends to \(\beta f:\beta X\to\beta Y\), and the compactified meaning satisfies
\[
  \M_Y^\Filt(f\circ a)=\beta f[\M_X^\Filt(a)].
\]
Thus finite quotients, modal-observation quotients and resource maps are not separate add-ons; they are continuous images of the same compact tail object.  Recent work on duality and codensity monads gives a useful broader setting for this kind of topological-categorical semantics, especially around profinite, Vietoris and ultrafilter-like constructions \cite{GehrkePetrisanReggio2020,LenkeMiliusUrbatWittrock2026}.  The present paper does not use codensity machinery, but those references help locate the Stone--\v{C}ech construction in a current rather than purely historical landscape.

Stone--\v{C}ech methods also play a major role in topological algebra, especially for discrete semigroups, where algebraic structure extends to \(\beta S\) and can be used to study recurrence and large combinatorial sets \cite{HindmanStrauss2012}.  Here the emphasis is different.  Compactification is used first to organise residual tail observations.  Algebraic structure on compactified residual spaces is left as a possible further question.

\subsection{Operational and process semantics}

Structural operational semantics describes programs by transitions between configurations \cite{Plotkin1981}.  CCS uses labelled transition systems to model communication, nondeterminism, synchronisation and behavioural equivalence \cite{Milner1989}.  The CCS part of this paper begins with that standard operational material, but it measures a different object.  An infinite execution
\[
  P_0\to P_1\to P_2\to\cdots
\]
is read as a stream of residual processes.

This separates the construction from trace semantics, bisimulation, testing equivalence and the usual modal characterisations of process equivalence.  Those semantics retain finite observations, branching structure, or both.  The residual-tail semantics records what remains after small sets of times have been discarded.  This is why finite prefixing, guarded unfolding and finite prefix-choice wrappers satisfy the tail laws proved in \Cref{sec:ccs}.  The same section gives the boundary of those laws: prefix erasure is not contextual for full CCS, since a finite prefix may synchronise with its environment before it disappears from the residual tail.

For calculi with binding, the situation is less immediate.  The construction can be repeated for a \(\pi\)-calculus process space once a quotient and a class of observations have been fixed, but names, binding, freshness and scope extrusion require additional structure.  Standard accounts of the \(\pi\)-calculus and nominal techniques provide the relevant background \cite{Milner1999,SangiorgiWalker2001,Pitts2013}.

\subsection{Temporal logic, fairness, and recurrence}

Temporal logics provide languages for specifying behaviour over time.  The safety/liveness distinction has a topological formulation in terms of sets of infinite behaviours \cite{AlpernSchneider1985}, and systems such as TLA and the Manna--Pnueli framework give expressive accounts of reactive computation \cite{Lamport1994,MannaPnueli1995}.

The present construction yields only a small temporal fragment directly, but that fragment is useful.  For a discrete observation space and \(A\subseteq X\),
\[
  \M_X(a)\subseteq\widehat A
\]
means eventual truth of \(A\), while
\[
  \M_X(a)\cap\widehat A\neq\varnothing
\]
means recurrent truth of \(A\).  The same reading applies to clopen observations in non-discrete spaces, using their clopen extensions in \(\beta X\).  More sequential properties, such as response obligations, bounded waiting and exact alternation patterns, require an observation space which already records the relevant obligations, clocks or histories.

The treatment of fairness by strengthening the time filter has a modest formal burden, but the word ``fairness'' covers many different assumptions in the process-algebra literature.  Recent work on progress, justness and fairness makes clear that these assumptions are not interchangeable \cite{vanGlabbeekHoefner2019}.  There is also recent work on expressing progress, justness and fairness assumptions in modal \(\mu\)-calculus formulae \cite{SpronckLuttikWillemse2024}.  The filter construction used here is compatible with that distinction: different fairness or progress requirements simply generate different filters, and hence different compact residual meanings.

\subsection{Powerdomains, collecting semantics, and abstraction}

Powerdomains model nondeterminism by replacing single outcomes with suitable collections of outcomes \cite{Plotkin1976,Smyth1978,AbramskyJung1994}.  Abstract interpretation similarly starts from a collecting semantics and then approximates it in a computable abstract domain \cite{CousotCousot1977}.  The present construction is a collecting semantics of a different kind.  Even one deterministic infinite run can have a non-singleton meaning: it may revisit several residual observations infinitely often, or combine a recurrent principal part with an escaping non-principal part.

Nondeterministic processes add a second layer of collection.  The run-sensitive semantics records one compact meaning for each infinite residual execution, while the flattened semantics takes the union of the resulting asymptotic points.  The finite-abstraction results give the computational interface to this compact semantics.  A finite Boolean algebra of observations gives a finite quotient, and in that quotient the meaning of a run is exactly the set of abstract states seen infinitely often.  For finite transition graphs this becomes strongly connected component analysis, and under the edge-fair reading of \Cref{prop:edge-fair-finite-graph} it becomes analysis of reachable terminal components.  These finite calculations are continuous finite images of the compact semantics.

\subsection{Coalgebraic and relational perspectives}

Coalgebraic semantics gives a general mathematical account of state-based systems and their observations, including transition systems and process behaviour \cite{Rutten2000}.  Coalgebraic trace semantics has continued to develop well beyond the early general theory; recent work such as \emph{Steps and Traces} gives a unified view of several approaches to coalgebraic trace semantics \cite{RotJacobsLevy2021}, and still more recent work studies compositionality for coalgebraic trace equivalence \cite{JourdeEtAl2026}.  The present paper is compatible with coalgebraic presentations of transition systems, but it is not a final-coalgebra or trace-equivalence semantics.  It starts from runs, or from runs generated by an operational transition system, and assigns compact tail meanings to the corresponding observation streams.

The relational construction in \Cref{sec:relational} addresses a different issue.  Some tail properties depend on correlations between observations made at the same time, or along the same asymptotic view of time.  Forming meanings in \(\beta(X_1\times\cdots\times X_k)\), rather than multiplying separate meanings afterwards, preserves these correlations.  This is relevant for comparing residual runs, recording agreement between components, and explaining why parallel composition cannot be reconstructed from separate unary meanings alone.

\subsection{Process observations and modal logics}

Hennessy--Milner logic relates modal observations of processes to behavioural equivalence in standard image-finite settings \cite{HennessyMilner1980,HennessyMilner1985}.  In the CCS application here, modal formulas have a more observational role.  Their satisfaction sets generate Boolean algebras of process predicates.  After separation, or after quotienting by observational indistinguishability, these Boolean algebras give Tychonoff process spaces.  The clopen temporal theorem then turns modal process observations into eventual and recurrent properties of residual executions.

Encodings between process calculi raise further questions, such as operational correspondence, divergence reflection, and preservation of behavioural equivalence.  General criteria for valid encodings are discussed by Gorla \cite{Gorla2010}.  For the present semantics, an encoding would also have to specify which residual observations are preserved and how the corresponding observation topologies are related.  Transporting compact residual meanings across calculi is therefore an additional semantic problem, rather than an automatic consequence of the general construction.

\section{Limitations and further directions}
\label{sec:limitations}

The construction records tail information.  This is the source of both its useful laws and its limitations.  If two residual streams agree after deleting a small set of times, they have the same compact meaning.  As a result, finite prefixes and finite prefix-choice wrappers disappear from the residual tail.  The non-contextuality proposition, \Cref{prop:ccs-prefix-erasure-not-contextual}, shows the matching boundary: a prefix can still matter before it disappears, because it may synchronise with an environment.  Tail equivalence is therefore not a congruence for arbitrary CCS contexts.

The semantics also depends on the chosen observation space.  The discrete topology on structural-congruence classes gives the finest version used in the paper.  Coarser Boolean observation algebras identify processes that satisfy the same chosen predicates, and different choices may induce different compact residual meanings.  This is not an anomaly; it is the usual dependence of an observational semantics on its observations.  The finite quotient results make the dependence explicit.

The compact carrier is not an effective data structure.  Free ultrafilters require choice, and spaces such as \(\beta\N\) are not finite or effectively presentable.  The computational content lies in images of the compact meaning.  Finite observational quotients, clopen predicates, relational observations and resource observables extract concrete information from it.  In particular, finite quotients reduce recurrent residual information to recurrent abstract states and strongly connected component calculations, while integer-valued resource observations detect escape by landing in \(\N^*\).

A related effective question is left open.  The definition of \(\M_X^\Filt(a)\) applies to arbitrary runs, not only to program-generated or computable runs.  If \(X\) is countable and discrete, there are only countably many computable streams \(\N\to X\), and hence only countably many exact compact meanings obtained from computable streams.  By contrast, \(\beta X\) has many more closed subsets.  Thus most closed sets in the compactified observation space cannot be exact meanings of computable streams.  At the same time, computable streams may still have meanings containing non-principal ultrafilter points: the identity stream \(n\mapsto n\) on \(\N\) has meaning \(\N^*\).  A systematic effective version would require additional computability structure on the observation space and on the chosen observations.

The treatment of time is also coarse.  It records tail membership, recurrence, escape and asymptotic correlation, but not the full order of a trace.  Bounded response times, exact alternation patterns, quantitative rates and protocol-style sequencing properties are not retained unless they have first been encoded into the observation space.  Similarly, the fairness construction specifies which asymptotic views of time are admissible; it does not replace an operational account of scheduling.

Three directions seem especially natural from this point.  One is to develop abstract domains for compact residual meanings beyond finite Boolean quotients, for example domains recording resource bounds, recurrent control components, queue sizes, pending obligations, or name-generation profiles.  A second is to look for useful sufficient conditions under which process operators descend to compact meanings; the descent criterion gives the exact abstract condition, while the CCS examples give both positive prefix-choice laws and a negative contextuality result.  A third is to extend the residual-process application to calculi with binding and mobility, where a nominal process space, name-invariant observations, freshness, restriction and scope extrusion would all have to be handled explicitly.

There is also a more algebraic line of development.  When the observation space carries semigroup or monoid operations, one can ask whether these operations extend to the compactified residual space in a useful way.  The existing algebra of \(\beta S\) for discrete semigroups suggests possible classifications of recurrent or escaping program behaviour, although such a classification would require hypotheses not needed for the collecting semantics developed here.

\section{Conclusion}
\label{sec:conclusion}

A filtered run
\[
  a:T\to X
\]
into a Tychonoff observation space has the compact tail meaning
\[
  \M_X^{\Filt}(a)
  =
  \beta a[\Ext(\Filt)]
  =
  \bigcap_{F\in\Filt}\cl_{\beta X}a[F].
\]
For sequential time this is the tail-cluster set of the stream in the Stone--\v{C}ech compactification of the observation space.  The same formula covers ordinary convergence, finite recurrence, mixed recurrence and escape, and unbounded behaviour in noncompact observation spaces.

The useful features of the construction are simple.  Meanings are invariant under changes on small sets of times.  Continuous observations commute with meanings, so finite quotients, modal observation quotients and resource observations are images of the same compact tail object.  Clopen regions of the compactification have a direct temporal reading: containment gives eventual truth, and nonempty intersection gives recurrence.  Strengthening the time filter gives fairness-refined meanings, and compactifying products preserves asymptotic correlations between simultaneous observations.

For CCS, infinite executions become streams of residual processes.  The resulting semantics separates stable looping, finite recurrent divergence, mixed recurrence and unbounded residual growth.  It validates residual-tail laws for finite prefixing, guarded unfolding, finite choice and finite prefix-choice forms.  The parallel-composition example supplies the boundary: these are tail laws, not full contextual equivalences for CCS, because synchronisation may use a finite prefix before it vanishes from the residual tail.

The compact space need not be enumerated.  Its practical content is obtained by applying observations.  Finite Boolean quotients turn compact residual meanings into recurrent abstract states, and finite transition graphs reduce these meanings to strongly connected component calculations.  Under the edge-fair reading of \Cref{prop:edge-fair-finite-graph}, the finite meanings are reachable terminal components.  Resource observables detect unbounded escape by mapping compact residual meanings into remainders such as \(\N^*\).

Stone--\v{C}ech compactification therefore gives a compact collecting semantics for residual process tails, while finite and resource-level observations provide the operational interface.  Once the observation topology is fixed, the same construction handles tail limits, recurrence, fairness, finite abstraction, relational correlation and unbounded escape.

\section*{Acknowledgement of AI assistance}
The author used an AI assistant during the preparation of this manuscript, including for copy-editing, formatting, and discussion of exposition.  The author is responsible for the final content, including all mathematical statements and proofs.

\end{document}